\documentclass[11pt]{article}

\usepackage[margin=1in]{geometry}
\usepackage{amsmath,amssymb,amsthm,mathtools}
\usepackage{mathrsfs}
\usepackage{booktabs}
\usepackage{multirow}
\usepackage{graphicx}
\usepackage{tikz}
\usetikzlibrary{arrows.meta,positioning,calc}
\usepackage{enumitem}
\usepackage[expansion=false]{microtype}
\usepackage[colorlinks=true,linkcolor=blue!60!black,citecolor=blue!60!black,urlcolor=blue!60!black]{hyperref}
\allowdisplaybreaks

\newtheorem{theorem}{Theorem}
\newtheorem{lemma}{Lemma}
\newtheorem{proposition}{Proposition}
\newtheorem{corollary}{Corollary}
\newtheorem{definition}{Definition}

\newtheorem{remark}{Remark}
\newtheorem{axiom}{Axiom}

\DeclareMathOperator{\CVaR}{CVaR}

\newcommand{\Ex}{\mathbb{E}}
\newcommand{\Prob}{\mathbb{P}}
\newcommand{\KK}{\mathcal{K}}
\newcommand{\TT}{\mathcal{T}}
\newcommand{\CC}{\mathscr{C}}
\newcommand{\Cone}{\mathcal{C}}
\newcommand{\Stil}[1]{\widetilde{S}_{#1}}
\newcommand{\bits}[1]{2^{#1}}
\newcommand{\pS}{\textsuperscript{\tiny S}}
\newcommand{\pD}{\textsuperscript{\tiny D}}
\newcommand{\pF}{\textsuperscript{\tiny F}}

\newcommand{\succone}{\succeq_{\Cone}}

\title{\textbf{GAUGE: A Formal Framework for Measuring Cryptographic
Security under Heterogeneous Adversary Cost Models}}

\author{
Bhanwar Gupta\thanks{Corresponding author. Email: \texttt{bgupta55@gmail.com}. ORCID: 0009-0009-1524-6103.}
\quad\and\quad
Sanjeev Rana\thanks{Email: \texttt{dr.sanjeevrana@mmumullana.org}. ORCID: 0000-0003-2911-3052.}
\\[1ex]
\normalsize Department of Computer Science and Engineering,\\
\normalsize Maharishi Markandeshwar Engineering College,\\
\normalsize Maharishi Markandeshwar (Deemed to be University),\\
\normalsize Mullana, Ambala, Haryana 133207, India
}
\date{}

\begin{document}
\maketitle

\begin{abstract}
Standards bodies, vendors, and migration planners report the security of
a cryptographic scheme as a single number of bits. That number is not a
property of the scheme: it is the output of an adversary cost model, an
accounting convention for what an attacker's time, memory, and quantum
resources are worth, and different bodies adopt different conventions.
NIST prices memory into its comparisons; ANSSI and BSI do not. The two choices can,
and provably do, disagree about which of two standardized schemes is more
secure.

GAUGE represents the security of a scheme not as a scalar but as a
function on the space of admissible cost models: a security profile.
Comparisons between schemes become comparisons between these functions.
A scheme that is ranked higher under one accounting convention and lower
under another is not a measurement error; it is a crossing profile, a
geometric fact about the two functions that any scalar rating necessarily
discards.

We formalize price functionals over a cone of adversary cost models, show
that resulting security profiles are piecewise-linear and concave, and prove that whenever two profiles cross,
no rating that is simultaneously faithful to the underlying costs, total
over all comparable pairs, and independent of the particular cost model
can exist. We call this the rating trilemma. We complement it with a
polynomial-time linear-programming procedure that certifies, for any
pair of schemes, whether their ranking is robust across all admissible
models, reverses under some models, or is genuinely incomparable.

We extend GAUGE with a two-layer risk measure: a coherent risk functional
aggregates stochastic cryptanalytic decay given a cost model, and a
credal-set construction aggregates systematic uncertainty about which
cost model is appropriate.

We instantiate GAUGE on the NIST post-quantum standards (ML-KEM,
ML-DSA-adjacent families, and classical anchors) and on a curated
25-year chronology of cryptanalytic breaks. The linear program certifies
a rating reversal for ML-KEM-512 against its AES-128 anchor that a
$4$-$5\%$ shift in how memory is priced is sufficient to trigger, we
measure a lattice-sieving cost drift of $9.79$ bits per year over an
eight-year window, and we show that a hybrid X25519 + ML-KEM-768
handshake reduces combined-break probability twenty-fold at a $2.3$
kilobyte cost. The artifact reproduces every table and figure in under
seven seconds. The result is a reporting format that keeps the accounting convention
explicit, and that gives standards bodies a linear-programming certificate
for when a disagreement between two cost-model positions is real.
\end{abstract}

\noindent\textbf{Keywords:} cryptographic security evaluation; adversary
cost models; post-quantum cryptography; security metrics; concrete
security; hybrid cryptography; risk measures.

\bigskip
\noindent\textit{Use of AI Tools: An AI assistant was used in formatting
and editing of the manuscript; all theorems, results, and numbers were
executed and verified by the authors, and no results were fabricated.}

\section{Introduction}\label{sec:intro}

\subsection{Scalar security estimates and their hidden conditioning}

Parameter selection, hybrid protocol design, migration deadlines, and
standards categories all take a scalar ``bits of security'' as input.
That scalar is not a property of a cryptographic scheme alone: it
depends on the adversary cost model under which the underlying attack
cost was estimated, and that model is almost never stated.

Table~\ref{tab:anatomy} makes this concrete for a single standardized
scheme. ML-KEM-768~\cite{fips203,kyber} is a NIST Category-3 scheme; its
dominant attack is a primal BKZ reduction whose core operation is lattice
sieving in block size $\beta\approx 637$, with classical sieving cost
exponents $2^{0.292\beta}$ in time and $2^{0.2075\beta}$ in
memory~\cite{bdgj16} and quantum exponent $2^{0.2653\beta}$~\cite{laa15}.
Under five accounting conventions, each of which appears in the
respected literature, the scheme's estimate and its position relative to
its own Category-3 anchor (AES-192) are as follows.

\begin{table}[t]
\centering\footnotesize
\caption{ML-KEM-768 versus its Category-3 anchor (AES-192) under five
literature-anchored accountings ($\beta=637$; bits are $\log_2$ attack
cost). Tags: (S)~specification-published, (D)~derived arithmetic from
cited exponents, (F)~regenerated from cited sources by the toolchain.}
\label{tab:anatomy}
\begin{tabular}{@{}p{4.2cm}p{2.6cm}rrr@{}}
\toprule
\textbf{Cost model} & \textbf{Source} & \textbf{ML-KEM-768} & \textbf{AES-192} & \textbf{Rel.}\\
\midrule
Classical, time-only (Core-SVP) & \cite{adps16,kyber} & $186$\pS & $192$ & $-6$\\
Classical, time$\times$memory & \cite{bdgj16,ecrypt} & $318.18$\pD & $192$ & $+126.18$\\
Quantum, unbounded depth, t-only & \cite{laa15} & $168.99$\pD & $96$ & $+72.99$\\
Quantum, depth-bounded & \cite{grassl16,js19} & $169$--$186$\pF & regime-dep.\pF & regime-dep.\pF\\
Quantum, gate-count & \cite{grassl16,js19} & $3$xx\pF & $1$xx\pF & $+2$xx\pF\\
\bottomrule
\end{tabular}
\end{table}

Three observations follow. \emph{Dispersion:} the anchor-relative security
of one scheme ranges from $-6$ to $+126$ bits across documented
conventions. \emph{Anchor drift:} the anchor itself moves by up to $96$ bits between classical and unbounded-depth quantum accounting, and the
category definition of NIST uses depth-bounded anchors~\cite{nistir8100}
while scheme estimates were predominantly computed under classical
time-only conventions---two different regions of the cost-model space,
compared as if commensurable. \emph{Convention status:} no convention in the table is a strawman;
every one has published defenders. A disagreement about which convention
to use is therefore a dispute about location in a structured parameter
space, not a factual error by either party.

\subsection{From scalars to profiles}

GAUGE (\emph{Generalized Accounting of Uncertainty in Guessing Effort})
makes the cost model a first-class mathematical object and the security
estimate a function of it. An adversary cost model pairs a machine class
with a price vector over canonical adversarial resources; a scheme's
security profile maps each admissible cost model to the $\log_2$ cost of
its cheapest attack under that model; orderings, sensitivities, and risk
measures are then defined over the profile rather than over a scalar.
Figure~\ref{fig:flow} summarizes the framework's structure.

\begin{figure}[t]
\centering
\begin{tikzpicture}[every node/.style={align=center, font=\small},
  box/.style={draw, rounded corners=2pt, minimum height=0.5cm, inner sep=3pt}]
\node[box, fill=blue!8] (root) at (0,0) {\textbf{GAUGE}};
\node[box] (cm) at (-3.4,-1.0) {adversary cost models $\kappa=(\mathfrak M,\mathbf c)$};
\node[box] (sp) at (3.4,-1.0) {security profiles $S_K(\kappa)$};
\node[box] (ord) at (0,-2.0) {security ordering $\succeq_\kappa$, $\succone$};
\node[box] (dom) at (-4.3,-3.0) {robust dominance};
\node[box] (cond) at (-1.4,-3.0) {conditional};
\node[box] (inc) at (1.6,-3.0) {inversion / incomparable};
\node[box] (tri) at (0,-4.0) {Rating Trilemma};
\node[box] (risk) at (0,-5.0) {risk extension (decay $\times$ model ambiguity)};
\node[box] (ev) at (0,-6.0) {cryptographic evaluation (PQC, history, hardware)};
\node[box] (app) at (0,-7.0) {applications (hybrid, migration, standards)};
\draw[->] (root) -- (cm); \draw[->] (root) -- (sp);
\draw[->] (cm) -- (ord); \draw[->] (sp) -- (ord);
\draw[->] (ord) -- (dom); \draw[->] (ord) -- (cond); \draw[->] (ord) -- (inc);
\draw[->] (dom) -- (tri); \draw[->] (cond) -- (tri); \draw[->] (inc) -- (tri);
\draw[->] (tri) -- (risk); \draw[->] (risk) -- (ev); \draw[->] (ev) -- (app);
\end{tikzpicture}
\caption{Conceptual structure of the framework.}
\label{fig:flow}
\end{figure}

\subsection{Contributions}

The contributions of this work are:
\begin{enumerate}[leftmargin=1.5em,itemsep=1pt]
\item We introduce GAUGE, a formal representation of cryptographic
security as a function over heterogeneous adversary cost models
(Sections~\ref{sec:models}--\ref{sec:profiles}), with structural
properties (piecewise linearity, monotonicity, anchor normalization)
established as Propositions~\ref{prop:structure}--\ref{prop:monotone}.
\item We develop security-profile ordering and characterize conditions
under which cryptographic rankings become cost-model dependent
(Theorem~\ref{thm:ordering}); we define the resulting
robust\slash conditional\slash incomparable classification and prove it decidable in
polynomial time by linear programming
(Proposition~\ref{prop:classification}).
\item We formalize the Rating Trilemma and establish the limitations of
scalar, model-independent security ratings
(Theorem~\ref{thm:trilemma}), together with a coherent two-layer risk
extension separating cost-model ambiguity from cryptanalytic decay
(Proposition~\ref{prop:risk}).
\item We evaluate the framework on NIST post-quantum standards and a
curated cryptanalysis chronology (Section~\ref{sec:eval}): the LP
certifies ranking reversals among standardized schemes, a family-level
sensitivity constant is identified, and the instrument is calibrated on
the IonQ \texttt{simulator} (aria-1 noise model; 13 jobs; 2026-09-15).
\end{enumerate}

\subsection{Organization}

Section~\ref{sec:related} reviews related work and positions GAUGE
against existing security notions. Sections~\ref{sec:models}--\ref{sec:profiles}
develop the formal apparatus. Section~\ref{sec:ordering} defines security
ordering and its robust classification. Section~\ref{sec:trilemma} proves
the Rating Trilemma. Section~\ref{sec:risk} develops the risk extension
and the temporal evolution of attack effort.
Section~\ref{sec:eval} reports the cryptographic evaluation.
Section~\ref{sec:applications} derives practical implications for hybrid
cryptography and migration. Sections~\ref{sec:impl}--\ref{sec:conclusion}
describe implementation, limitations, and conclusions. Proofs not given
inline appear in Appendix~\ref{app:proofs}.

\section{Background and Related Work}\label{sec:related}

\subsection{Concrete cryptographic security}

The modern treatment of quantitative security statements begins with
definition-based security~\cite{gm84} and practice-oriented provable
security with exact (concrete) reductions~\cite{br93,br96}. Concrete
security statements condition on a fixed adversary and resource model
(bounded time, bounded queries); the long-running debate on reduction
losses~\cite{km05} concerns the gap between nominal and effective
security under such conditioning. GAUGE takes a different approach: the security estimate is represented as
a function over a family of cost models, so the reduction-loss gap
appears as one explicitly priced uncertainty channel
(U6, Section~\ref{sec:risk}).

\subsection{Security levels and work factors}

Key-size selection and work-factor estimation have a long tradition:
Lenstra--Verheul extrapolated hardware progress to select classical key
sizes~\cite{lv01}; ECRYPT's consensus reports aggregate expert judgment
into point recommendations~\cite{ecrypt}; NIST's post-quantum categories
define equivalence classes relative to AES/SHA resource
costs~\cite{nistir8100,nistir8413}. These approaches output scalars
conditional on unstated or partially stated models. GAUGE makes anchor-relative comparison explicit and adds the ordering
theory needed to classify pairs of schemes.

\subsection{Cost models and resource estimation}

Careful accounting of the full cost of cryptanalytic attacks includes
Wiener's dollar-cost analysis~\cite{wiener04}, Bernstein's treatment of
brute force~\cite{bernstein05}, and classical time--memory
trade-offs~\cite{ss81}. Within lattice cryptanalysis, the sieving
literature~\cite{nv08,mv10,bdgj16,laa15} and the LWE
estimator~\cite{aps15} compute best-attack costs under chosen
conventions. Each such tool provides a point estimate; GAUGE computes the geometry
of the space those estimates inhabit.

\subsection{Post-quantum security estimation}

Standardization documents (FIPS~203--205~\cite{fips203,fips204,fips205};
round reports~\cite{nistir8413}; specifications~\cite{kyber,mceliece,hqc25})
state security under specific conventions (predominantly time-only
classical estimates against depth-bounded quantum anchors). GAUGE makes
the convention dependence of these statements explicit and comparable.

\subsection{Quantum-resource security models}

Quantum cost accounting is an active area: AES/SHA circuit
estimates~\cite{grassl16,amy16}, the RAM-model analysis of
Jaques--Schanck~\cite{js19}, factoring resource
estimates~\cite{ge21}, and quantum speedups for information-set
decoding~\cite{bernstein10}. These works fix a cost model and compute
under it; GAUGE treats the choice among such models as data, and
measures its consequences.

\subsection{Risk and uncertainty in security assessment}

Coherent risk measures~\cite{artzner99,ru00,delbaen02}, decision theory
under ambiguity~\cite{walley91,gs89,kmk05,bewley02}, security
economics~\cite{gl02}, quantum-threat timing~\cite{mosca18,gri}, and
measurement uncertainty frameworks~\cite{stevens46,gum} supply mature
mathematics that has not been connected to the strength of
cryptographic assumptions in this combined form. No prior work applying this combination to concrete cryptographic
security estimates was identified. GAUGE draws on coherent risk
measures for the decay layer, credal sets for the model-ambiguity
layer, and the Type~A/B uncertainty separation for the trust budget.

\subsection{Relation to social-choice impossibility results}\label{sec:socialchoice}

Theorem~\ref{thm:ordering} and the Rating Trilemma
(Theorem~\ref{thm:trilemma}) are impossibility results, and readers
familiar with social choice will correctly notice a family resemblance
to Arrow's theorem and to the Szpilrajn extension theorem's failure mode
under cyclic or crossing preferences. The resemblance is real, and we state precisely where it holds and
where it does not. Szpilrajn guarantees that any strict partial order extends
to a total order; the obstruction we exploit is not partial-order
incompleteness but genuine \emph{sign reversal} of a continuous
real-valued function family on a connected domain, so the relevant
extension failure is closer to the impossibility of a continuous
selection from a family of linear orders that cross---a Condorcet-cycle
phenomenon rather than a missing-comparability phenomenon. Arrow's
theorem rules out a social welfare function satisfying four axioms over
\emph{arbitrary} preference profiles of \emph{discrete} voters; the
Rating Trilemma instead concerns a single decision-relevant object (a
convention for a scalar security level) evaluated over a
\emph{continuum} of cost-model ``voters'' with a specific polyhedral
cone structure, and its proof (Appendix~\ref{app:proofs}) uses the
geometry of that cone directly rather than an Arrovian axiomatic
reduction. The mathematical content we claim as new is not the general
fact that crossing objective functions defeat scalarization---that fact
is old and, in the form we use it, elementary---but rather (i) showing
that the specific cost-model geometry of cryptographic security
estimates is rich enough for such crossings to be certified
constructively (not merely possible in principle) via linear
programming over real, standardized schemes, and (ii) the two-layer risk
decomposition of Section~\ref{sec:risk}, which has no direct analogue in
the classical impossibility literature. Theorem~\ref{thm:ordering} serves to license the LP-certified
classification that follows; the classification, evaluation, and risk
layers carry the paper's technical content.

\subsection{Relation to existing security notions}\label{sec:notions}

Table~\ref{tab:notions} positions GAUGE against the security notions a
cryptographer will know.

\begin{table}[t]
\centering\small
\caption{GAUGE relative to existing security notions.}
\label{tab:notions}
\begin{tabular}{@{}p{3.2cm}p{4.4cm}p{5.6cm}@{}}
\toprule
\textbf{Existing concept} & \textbf{What it measures} & \textbf{GAUGE difference}\\
\midrule
Security parameter & target computational security & GAUGE models its cost-model dependence\\
Concrete security & attack success vs.\ resources (fixed model) & GAUGE represents heterogeneous resource pricing across models\\
Asymptotic security & scaling behaviour & GAUGE supports finite-resource comparison on the same footing\\
Work factor & estimated attack effort & GAUGE treats effort as model-dependent (a profile)\\
Security level / category & scalar classification & GAUGE represents the underlying profile and its crossings\\
Risk assessment & decision uncertainty & GAUGE separates cost-model ambiguity (systematic) from cryptanalytic decay (stochastic)\\
PQC security categories & standardized classification & GAUGE analyzes sensitivity to adversary-cost assumptions and exhibits inversions\\
\bottomrule
\end{tabular}
\end{table}

\section{Adversary Cost Models}\label{sec:models}

\begin{definition}[Resource ledger]\label{def:ledger}
A \emph{resource ledger} is a finite ordered set
 $\TT=(R_1,\dots,R_n)$ of canonical adversarial resources. The default
ledger is $\TT_6=(T,M,Q,D,W,N)$: logical time $T$, classical memory $M$,
logical qubits $Q$, quantum circuit depth $D$, quantum width $W$, and
parallel instances $N$. Coherent extensions (energy, leakage traces,
target-count amortization) are admissible; their role in
Section~\ref{sec:trilemma} is not incidental but central.
\end{definition}

\begin{definition}[Adversary cost model]\label{def:model}
An \emph{adversary cost model} is a pair
 $\kappa=(\mathfrak M,\mathbf c)$, where $\mathfrak M$ is a machine class
(classical sequential or parallel; quantum with unbounded depth; quantum
with depth bound $D_{\max}$; quantum with quantum-accessible random
memory~\cite{js19}) and
 $\mathbf c=(c_1,\dots,c_n)\in\mathbb{R}^n_{+}$ is a \emph{price vector}
assigning to each resource a price in units of the time resource (the
\emph{num\'eraire convention} $c_T=1$). The \emph{price cone} is
\[
\Cone_\TT \;=\; \{\mathbf c\in\mathbb{R}^n_{+}\;:\;c_T=1\}.
\]
Time-only pricing corresponds to $\mathbf c=(1,0,\dots,0)$;
time$\times$memory pricing to $c_T=c_M=1$; gate-count pricing
(depth$\times$width) to $c_D=c_W=1$ with the time coordinate suppressed.
\end{definition}

The num\'eraire convention is not cosmetic: it is precisely what makes
zero-memory anchors (e.g.\ AES key search, which uses negligible memory)
price-invariant across time-only and memory-aware conventions, which is
required for anchor-relative comparison to be well defined
(Proposition~\ref{prop:monotone}).

\begin{definition}[Attack record; attack set]\label{def:attack}
An \emph{attack record} for a scheme $K$ is a pair $(A,x_A)$: an
algorithm $A$ with a stated success probability, together with a
log-resource vector $x_A\in\mathbb{R}^n$ (i.e., attack $A$ consumes
 $\bits{x_{A,i}}$ units of resource $R_i$), carrying provenance (paper,
year). The \emph{attack set} $\mathcal{A}_K$ is the set of all such
records at evaluation time; it grows as cryptanalysis progresses. Attack
 $A$ is \emph{feasible} in $\kappa$ if $A$ is admissible in the machine
class $\mathfrak M$ of $\kappa$.
\end{definition}

\begin{definition}[Literature closure]\label{def:closure}
The \emph{admissible set} $\KK$ is the set of cost models each anchored
to a position actually taken in peer-reviewed publications, standards
documents, or estimator tools, with the citation attached to every
element. $\KK$ is an anti-strawman construction: no model in $\KK$ lacks
a published defender.
\end{definition}

\begin{table}[t]
\centering\small
\caption{Named accounting conventions as elements of the price cone,
with their defended positions.}
\label{tab:conventions}
\begin{tabular}{@{}llll@{}}
\toprule
\textbf{Convention} & \textbf{Pricing} & \textbf{Defended in} & \textbf{Lattice attack cost}\\
\midrule
Core-SVP & time-only & \cite{adps16,kyber} & $2^{0.292\beta}$\pS\\
TM-sieving & time$\times$memory & \cite{bdgj16,ecrypt} & $2^{0.4995\beta}$\pD\\
Quantum sieve & time-only, quantum & \cite{laa15} & $2^{0.2653\beta}$\pD\\
Depth-bounded & $D\le D_{\max}$ & \cite{nistir8100,grassl16,js19} & regime-dependent\pF\\
Gate-count & depth$\times$width & \cite{grassl16} & $\gtrsim$ TM\pF\\
RAM-model & quantum, RAM costs & \cite{js19} & revised exponents\pF\\
\bottomrule
\end{tabular}
\end{table}

\section{GAUGE Security Profiles}\label{sec:profiles}

\begin{definition}[Price functional]\label{def:price}
For a cost model $\kappa=(\mathfrak M,\mathbf c)$ and attack $A$ feasible
in $\kappa$, the \emph{price} of $A$ is
\[
P_\kappa(A)\;=\;\mathbf c\cdot x_A\;=\;\sum_{i=1}^n c_i\,x_{A,i}
\qquad(\text{bits}).
\]
\end{definition}

\begin{definition}[Security profile]\label{def:profile}
The \emph{security profile} of scheme $K$ is
\[
S_K(\kappa)\;=\;\min\{\mathbf c\cdot x_A\;:\;(A,x_A)\in\mathcal{A}_K,\
A\ \text{feasible in}\ \kappa\}.
\]
For an \emph{anchor problem} $G$ (e.g., AES-128 exhaustive key search)
with reference attack cost $S_G(\kappa)$, the \emph{anchor-relative
profile} is $\Stil{K}(\kappa)=S_K(\kappa)-S_G(\kappa)$.
\end{definition}

The scalar in a specification sheet is $S_K(\kappa_0)$ for an unstated
 $\kappa_0$; the profile makes the dependence explicit
(Figure~\ref{fig:profiles}).

\begin{figure}[t]
\centering
\begin{tikzpicture}[scale=1.0]
\draw[->] (0,0) -- (7.0,0) node[below] {$c_M$ (memory price)};
\draw[->] (0,0) -- (0,4.3);
\node[left] at (0,0.55) {$192$};
\node[left] at (0,0.40) {$186$};
\node[left] at (0,3.70) {$318$};
\draw[dashed,thick] (0,0.55) -- (6.6,0.55);
\node[above] at (4.6,0.55) {\small AES-192 (anchor: memory-free)};
\draw[thick] (0,0.40) -- (6.6,3.70);
\node[below left] at (5.9,3.10) {\small ML-KEM-768};
\fill (0.30,0.55) circle (1.7pt);
\draw[dotted] (0.30,0) -- (0.30,0.55);
\node[below] at (0.30,-0.06) {\scriptsize $c_M^{*}=0.045$};
\node[below] at (0.05,-0.06) {\scriptsize $0$ (Core-SVP)};
\node[below] at (6.35,-0.06) {\scriptsize $1$ (TM)};
\end{tikzpicture}
\caption{Security profiles (absolute bits) on the classical memory-price
slice: $S_{\text{AES-192}}$ is constant $192$; $S_{\text{ML-KEM-768}}
=186+132.18\,c_M$. The profiles cross at $c_M^{*}=0.045$; pricing memory
at $4.5\%$ of the time unit reverses the comparison
(Theorem~\ref{thm:ordering}, Experiment~2).}
\label{fig:profiles}
\end{figure}

\begin{proposition}[Structure of security profiles]\label{prop:structure}
Fix an attack set with log-resource vectors $\{x_i\}_{i\le m}$ and the
price cone $\Cone_\TT$. Then:
\emph{(i)} $S_K(\mathbf c)=\min_i\,\mathbf c\cdot x_i$;
\emph{(ii)} $S_K$ is concave, positively homogeneous, and piecewise
linear on $\Cone_\TT$; its linearity regions are precisely the normal
cones of the \emph{attack polytope}
 $P_K=\mathrm{conv}\{x_i\}$~\cite{klst71};
\emph{(iii)} for any polyhedral region $\Cone\subseteq\Cone_\TT$, both
 $\max_{\Cone}S_K$ and $\min_{\Cone}S_K$ are attained at vertices of
 $\Cone$ and are computable by vertex enumeration in time
 $O(|V(\Cone)|\cdot m)$;
\emph{(iv)} $S_K$ is a tropical polynomial with Newton polytope $P_K$.
\end{proposition}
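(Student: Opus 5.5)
The plan is to derive everything from the finite-minimum representation in (i), and then read off (ii)--(iv) as standard facts about minima of finitely many linear forms. For (i), fix a machine class. Definition~\ref{def:profile} takes the minimum of $\mathbf c\cdot x_A$ over feasible records, and the price $P_\kappa(A)$ of Definition~\ref{def:price} depends on $\kappa$ only through $\mathbf c$. After discarding infeasible records and indexing the rest as $x_1,\dots,x_m$, we get $S_K(\mathbf c)=\min_i \mathbf c\cdot x_i$ with no further work.

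For (ii), I would argue as follows.
\begin{itemize}
\item \emph{Concavity.} Each map $\mathbf c\mapsto \mathbf c\cdot x_i$ is linear, and a pointwise minimum of linear functions is concave.
\item \emph{Homogeneity.} $S_K(\lambda\mathbf c)=\lambda S_K(\mathbf c)$ for $\lambda\ge 0$. This must be read on the cone $\mathbb R^n_+$ generated by $\Cone_\TT$, since the normalized slice $c_T=1$ is not itself closed under scaling. I will say explicitly that this is how the claim is meant.
\item \emph{Piecewise linearity.} On the set $U_i=\{\mathbf c:\mathbf c\cdot x_i\le \mathbf c\cdot x_j\ \forall j\}$, $S_K$ equals the single linear form $\mathbf c\cdot x_i$. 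Each $U_i$ is a polyhedral cone, and the $U_i$ cover the domain.
\item \emph{Normal cones.} Minimizing the linear functional $\mathbf c$ over $P_K=\mathrm{conv}\{x_i\}$ attains its optimum on a face of $P_K$, and the face is determined by $\mathbf c$. So $U_i$ is exactly the (inner) normal cone of $P_K$ at $x_i$. Points $x_i$ that are not vertices of $P_K$ have normal cones with empty interior and contribute no full-dimensional region. This is the normal-fan correspondence of~\cite{klst71}. The linearity regions are then these cones intersected with $\Cone_\TT$.
\end{itemize}

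For (iv), I would rewrite (i) in the min-plus semiring as $S_K(\mathbf c)=\bigoplus_i \bigodot_k c_k^{\odot x_{i,k}}$. Here the exponents are the $x_i$, which are real rather than integer, so "tropical polynomial" should be understood in the generalized, real-exponent sense. Its Newton polytope is $\mathrm{conv}\{x_i\}=P_K$ by definition. The dual subdivision is the normal fan from (ii), so (iv) is essentially a restatement of (i)--(ii).

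Part (iii) is the main obstacle, and I expect it to need a correction as stated.
\begin{itemize}
\item \emph{Minimum.} For a bounded polyhedral region $\Cone$ (a polytope), a concave function attains its minimum at an extreme point. Hence $\min_\Cone S_K=\min_{v\in V(\Cone)}\min_i v\cdot x_i$, which costs $O(|V(\Cone)|\cdot m)$ by direct enumeration. If $\Cone$ is unbounded, I would restrict to the bounded case, or handle recession directions separately: the minimum is $-\infty$ or attained at a vertex according to the signs of $d\cdot x_i$ along rays $d$.
\item \emph{Maximum.} The maximum of a concave function is \emph{not} in general attained at a vertex. For example, two crossing profiles on a segment have their minimum-envelope peak in the interior. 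So instead of vertex enumeration I would prove the corrected statement: $\max_\Cone S_K$ is the optimal value of the linear program
\[
\max\ t \quad\text{s.t.}\quad t\le \mathbf c\cdot x_i\ (i\le m),\quad \mathbf c\in\Cone .
\]
This LP is polynomial-time solvable, and its optimum is attained at a vertex of the lifted polyhedron in $(\mathbf c,t)$-space, which need not be a vertex of $\Cone$. That vertex is either a vertex of $\Cone$ or a point on a crossing locus $\mathbf c\cdot x_i=\mathbf c\cdot x_j$. Equivalently, the maximum is found by enumerating vertices of the common refinement of $\Cone$ with the normal fan of $P_K$.
\end{itemize}
This LP formulation is also what feeds the classification result (Proposition~\ref{prop:classification}), so the corrected form of (iii) is the one used downstream.
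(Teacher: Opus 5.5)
For (i), (ii), (iv) and the minimum half of (iii), your argument is the paper's. You treat $S_K$ as a pointwise minimum of linear forms, identify its linearity regions with the normal fan of $P_K$, and get the minimum from concavity by writing $\mathbf c$ as a convex combination of vertices of $\Cone$. Your side caveats are all legitimate and the paper simply passes over them:
\begin{itemize}
\item homogeneity only makes sense on the cone generated by $\Cone_\TT$, since the slice $c_T=1$ is not closed under scaling;
\item the vertex argument for the minimum needs $\Cone$ bounded, or a separate recession argument, and $\Cone_\TT$ itself is unbounded;
\item ``tropical polynomial'' must be read with real exponents.
\end{itemize}

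On the maximum, you are right and the paper is wrong. The paper says the maximum ``follows symmetrically since a linear functional attains its maximum over a polytope at a vertex and $\max_{\Cone}\min_i\mathbf c\cdot x_i$ is attained where some $\mathbf c\cdot x_i$ is maximized.'' The last clause does not follow: concavity controls minima, not maxima, and the peak of a lower envelope typically sits on a crossing locus.

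Here is a counterexample inside the paper's own setting, even with nonnegative log-resource vectors. On the ledger $(T,M,Q)$ take $x_1=(0,1,0)$ and $x_2=(0,0,1)$. Let $\Cone=\{\mathbf c\in\Cone_\TT: c_M+c_Q\le 1\}$, whose vertices are $(1,0,0)$, $(1,1,0)$ and $(1,0,1)$. Then $S_K=\min(c_M,c_Q)$ vanishes at every vertex but equals $\tfrac12$ at $(1,\tfrac12,\tfrac12)$.

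So the maximum assertion of (iii), and its $O(|V(\Cone)|\cdot m)$ vertex-enumeration bound, are false as stated, and no proof of the statement as written exists. Your replacement is the correct repair: the LP $\max\{t: t\le\mathbf c\cdot x_i\ \forall i,\ \mathbf c\in\Cone\}$, or equivalently enumerating vertices of the subdivision of $\Cone$ by the cells $U_i\cap\Cone$ from (ii).

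The vertex claim does survive in one special case. If all $x_i\ge 0$ and $\Cone$ has a componentwise-largest point, Proposition~\ref{prop:monotone}(i) puts the maximum at that point, which is necessarily a vertex. The one-parameter memory-price slices of Section~\ref{sec:eval} are of this kind, which is presumably why the error never surfaces in the evaluation. Proposition~\ref{prop:classification} does not rely on the false claim, since it is proved through its own LPs.
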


\begin{proof}
(i) is the definition. (ii) A pointwise minimum of linear functionals is
concave and positively homogeneous; the linearity regions of
 $\min_i\,\mathbf c\cdot x_i$ are the normal cones of
 $\mathrm{conv}\{x_i\}$. (iii) Concavity gives, for
 $\mathbf c=\sum_k\lambda_k\mathbf v_k$ a convex combination of vertices,
 $S_K(\mathbf c)\ge\sum_k\lambda_k S_K(\mathbf v_k)\ge\min_k
S_K(\mathbf v_k)$, so the minimum over $\Cone$ equals the vertex
minimum; the maximum follows symmetrically since a linear functional
attains its maximum over a polytope at a vertex and
 $\max_{\Cone}\min_i\mathbf c\cdot x_i$ is attained where some
 $\mathbf c\cdot x_i$ is maximized. (iv) Immediate from (ii).
\end{proof}

\begin{proposition}[Monotonicity and normalization]\label{prop:monotone}
\emph{(i) Price monotonicity:} if $\mathbf c\le\mathbf c'$ componentwise then $S_K(\mathbf c)\le S_K(\mathbf c')$.
\emph{(ii) Knowledge monotonicity:} if
 $\mathcal{A}_K\subseteq\mathcal{A}'_K$ then
 $S_K\ge S'_K$: adding attack knowledge only decreases the known-attack
profile, which is therefore an upper bound on true security that
tightens with cryptanalytic progress.
\emph{(iii) Anchor-normalization invariance:} if the effort statistic is
rescaled by a constant factor (equivalently, $2^{S}$ is replaced by
 $\lambda\,2^{S}$ for a monotone $1$-homogeneous effort functional), the
anchor-relative profile $\Stil{K}$ is unchanged. In particular,
 $\Stil{K}$ is the invariant observable; absolute bits are
convention-dependent up to the anchor.
\end{proposition}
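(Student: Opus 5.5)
The three parts are elementary consequences of Definition~\ref{def:profile} and Proposition~\ref{prop:structure}(i). I would treat them separately, with the machine class $\mathfrak M$ held fixed throughout so that the feasible attack set does not change when prices change.

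For \emph{(i)}, I would fix $\kappa=(\mathfrak M,\mathbf c)$ and $\kappa'=(\mathfrak M,\mathbf c')$ with $\mathbf c\le\mathbf c'$, and write $S_K(\mathbf c)=\min_A \mathbf c\cdot x_A$ over the same feasible records. For each record, $\mathbf c'\cdot x_A-\mathbf c\cdot x_A=\sum_i (c'_i-c_i)\,x_{A,i}$. This is nonnegative provided $x_{A,i}\ge 0$, that is, provided every attack consumes at least one unit of each resource it is charged for. I would make this explicit as the standing convention that log-resource vectors lie in $\mathbb{R}^n_{+}$; unused resources are recorded as $x_{A,i}=0$. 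Taking the minimum of a pointwise larger family of functionals then gives $S_K(\mathbf c)\le S_K(\mathbf c')$. The main obstacle of the whole proposition sits here: without nonnegativity, (i) is false, because a coordinate with $x_{A,i}<0$ makes raising $c_i$ lower the price. So the proof must invoke this convention, or restrict (i) to coordinates where all $x_{A,i}\ge 0$, rather than hide it.

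For \emph{(ii)}, I would fix $\kappa$ and note that the feasible subset of $\mathcal{A}_K$ is contained in the feasible subset of $\mathcal{A}'_K$, since feasibility depends only on $A$ and $\mathfrak M$. A minimum over a superset is no larger, so $S'_K(\kappa)\le S_K(\kappa)$ pointwise. The interpretive clause, that the known-attack profile upper-bounds true security, follows by taking $\mathcal{A}'_K$ to be the (hypothetical) complete attack set. If $\mathcal{A}_K$ has no feasible element, I would use the convention $\min\emptyset=+\infty$, and the inequality still holds.

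For \emph{(iii)}, I would model the rescaling as replacing the effort $2^{S}$ by $\lambda\,2^{S}$ with the same $\lambda>0$ for the scheme and its anchor. This is the content of a common $1$-homogeneous effort functional. Because $\log_2$ is monotone, $\log_2(\lambda\min_A 2^{\mathbf c\cdot x_A})=\log_2\lambda+S_K(\kappa)$, and likewise for $S_G$. The shift $\log_2\lambda$ then cancels in $\Stil{K}=S_K-S_G$. The only point needing care is that $\lambda$ must not depend on the scheme. I would state this as the hypothesis that the rescaling is a property of the accounting convention and not of the attacked problem. Under that hypothesis the remark that absolute bits are convention-dependent only up to the anchor follows at once.
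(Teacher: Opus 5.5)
Your proof is correct and follows the paper's own argument part by part. For (i), monotonicity of each linear price $\mathbf c\cdot x_A$ is preserved under the minimum; for (ii), you minimize over a larger feasible set; for (iii), the common factor $\lambda$ cancels in the ratio $2^{S_K}/2^{S_G}$. Your explicit requirement $x_{A,i}\ge 0$ in (i) is a real improvement, because the paper's one-line claim that each $\mathbf c\cdot x_i$ is nondecreasing in $\mathbf c$ uses this assumption silently, while Definition~\ref{def:attack} only places $x_A$ in $\mathbb{R}^n$.
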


\begin{proof}
(i) Each $\mathbf c\cdot x_i$ is nondecreasing in $\mathbf c$; minima
preserve this. (ii) Minimizing over a larger set can only decrease the
value. (iii) $\Stil{K}=\log_2(2^{S_K}/2^{S_G})$; positive $1$-homogeneity
of the effort functional cancels in the ratio.
\end{proof}

\begin{remark}[Semantics of the unit]\label{rem:unit}
For generic search problems the operational semantics of ``$n$ bits of
security'' is the $\log_2$ of expected optimal guessing
effort~\cite{massey94,arikan96}, which is the adversary's actual
expenditure; Proposition~\ref{prop:monotone}(iii) then says that only
anchor-relative differences of this quantity are invariant. A thermodynamic
floor exists as well---each irreversible bit operation dissipates at least
 $kT\ln 2$ joules~\cite{landauer61,bennett89}---but we use it only as a
unit anchor, not as an estimate.
\end{remark}

\begin{definition}[Fragility; margins]\label{def:fragility}
The \emph{fragility index} of $K$ over $\KK$ is
 $\Delta_K=\max_{\kappa,\kappa'\in\KK}|\Stil{K}(\kappa)-\Stil{K}(\kappa')|$,
and the \emph{relative fragility} is
 $\Delta_K/\sigma^{\mathrm{nom}}_K$, where $\sigma^{\mathrm{nom}}_K$ is
the claimed level under the scheme's declared (time-only) convention.
The \emph{attack margin} is the gap between the claimed level and the
best known attack under the declared model; the \emph{evidential margin}
is the gap to the best provable lower bound---close to the full claim
for essentially all deployed primitives, including AES and ML-KEM.
\end{definition}

\section{Security Ordering}\label{sec:ordering}

\subsection{Pointwise and robust orderings}

\begin{definition}[Pointwise security ordering]\label{def:pointwise}
For schemes $K_1,K_2$ and cost model $\kappa\in\KK$:
\[
K_1\succeq_\kappa K_2 \quad\Longleftrightarrow\quad
\Stil{K_1}(\kappa)\ge \Stil{K_2}(\kappa).
\]
\end{definition}

\begin{definition}[Robust security ordering; classification]\label{def:robust}
For a region $\Cone\subseteq\Cone_\TT$:
 $K_1\succone K_2$ (\emph{$K_1$ dominates $K_2$ on $\Cone$}) iff
 $\Stil{K_1}(\kappa)\ge\Stil{K_2}(\kappa)$ for all $\kappa\in\Cone$. For
generic profiles, each pair falls into exactly one of:
\begin{itemize}[leftmargin=1.5em,itemsep=0pt]
\item \textbf{Robust dominance}: $K_1\succone K_2$ with strict
inequality somewhere on $\Cone$ ($K_1$ wins throughout);
\item \textbf{Conditional dominance}: $K_1\succone K_2$ with strict
inequality only on a proper subregion (superiority holds only for models
in that subregion; elsewhere the schemes are measurement-equivalent);
\item \textbf{Incomparable}: there exist $\kappa,\kappa'\in\Cone$ with
 $\Stil{K_1}(\kappa)>\Stil{K_2}(\kappa)$ and
 $\Stil{K_1}(\kappa')<\Stil{K_2}(\kappa')$ (the ordering reverses).
\end{itemize}
The degenerate case $\Stil{K_1}\equiv\Stil{K_2}$ on $\Cone$ is
\emph{measurement equivalence}: the schemes are indistinguishable by the
instrument (realized in Section~\ref{sec:eval} by AES-128 and
SLH-DSA-128s).
\end{definition}

A category claim (``$K$ is in Category $c$'') is, in these terms, the
statement $\Stil{K}(\kappa)\ge 0$ against the category-$c$ anchor
\emph{over an implicitly quantified region} $\Cone$ that is almost never
stated. Definition~\ref{def:robust} forces the region into the open.

\subsection{Cost-model-dependent orderings}

\begin{theorem}[Cost-model-dependent security ordering]\label{thm:ordering}
Let $K_1,K_2$ be schemes with profiles over an admissible set $\KK$.
Call a total preorder $\succeq$ on schemes \emph{sound} if
 $K\succeq K'$ implies $\Stil{K}(\kappa)\ge\Stil{K'}(\kappa)$ for all
 $\kappa\in\KK$ (every ranking it asserts is supported at every
admissible model). If there exist $\kappa_a,\kappa_b\in\KK$ with
\[
\Stil{K_1}(\kappa_a)>\Stil{K_2}(\kappa_a)
\qquad\text{and}\qquad
\Stil{K_1}(\kappa_b)<\Stil{K_2}(\kappa_b),
\]
then no sound total ordering of $K_1$ and $K_2$ exists: any total
ranking of the pair is unsupported at some admissible cost model.
\end{theorem}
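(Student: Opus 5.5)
The argument is a direct case analysis on totality, using only the definition of soundness and the two witnessing models $\kappa_a,\kappa_b$. Suppose, for contradiction, that $\succeq$ is a sound total preorder on a set of schemes containing $K_1$ and $K_2$. Totality forces $K_1\succeq K_2$ or $K_2\succeq K_1$ (possibly both), so two cases suffice.

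\emph{Case $K_1\succeq K_2$.} Soundness gives $\Stil{K_1}(\kappa)\ge\Stil{K_2}(\kappa)$ for every $\kappa\in\KK$. Specializing to $\kappa=\kappa_b\in\KK$ yields $\Stil{K_1}(\kappa_b)\ge\Stil{K_2}(\kappa_b)$. This contradicts the hypothesis $\Stil{K_1}(\kappa_b)<\Stil{K_2}(\kappa_b)$.

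\emph{Case $K_2\succeq K_1$.} Symmetrically, soundness at $\kappa_a$ gives $\Stil{K_2}(\kappa_a)\ge\Stil{K_1}(\kappa_a)$. This contradicts $\Stil{K_1}(\kappa_a)>\Stil{K_2}(\kappa_a)$.

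The tie case, in which both relations hold, is already covered by either case; it would in any event require the profiles to agree at both points. Hence no sound total preorder exists. Equivalently, each possible ranking of the pair is contradicted at one of $\kappa_a$ or $\kappa_b$, which gives the rephrasing ``unsupported at some admissible cost model.''

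There is no real obstacle. The one point needing care is to check that the quantifier in soundness ranges over all of $\KK$, and that $\kappa_a,\kappa_b$ lie in $\KK$ itself, not merely in the ambient cone $\Cone_\TT$. I would also remark that Proposition~\ref{prop:structure} is not needed. Concavity and piecewise linearity only matter later, for locating a crossing point such as $c_M^{*}$, or for certifying $\kappa_a,\kappa_b$ by linear programming as in Proposition~\ref{prop:classification}. I would close by noting that the hypotheses are exactly the \textbf{Incomparable} clause of Definition~\ref{def:robust}, so the theorem says that incomparable pairs admit no sound total ranking.
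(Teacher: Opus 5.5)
Your proposal is correct and is essentially the paper's own proof: totality gives $K_1\succeq K_2$ or $K_2\succeq K_1$, and soundness evaluated at $\kappa_b$ (respectively $\kappa_a$) contradicts the strict reversal there. Your side remarks are accurate but not needed for the argument, namely that the tie case is subsumed and that Proposition~\ref{prop:structure} plays no role.
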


\begin{proof}
By totality, $K_1\succeq K_2$ or $K_2\succeq K_1$; assume the former
without loss of generality. Soundness yields
 $\Stil{K_1}(\kappa)\ge\Stil{K_2}(\kappa)$ for all $\kappa\in\KK$,
contradicting the strict reversal at $\kappa_b$. The symmetric case
contradicts the reversal at $\kappa_a$.
\end{proof}

Theorem~\ref{thm:ordering} is not merely a possibility result:
Section~\ref{sec:eval} exhibits such pairs among NIST-standardized
schemes, with linear-programming certificates. It also motivates the
classification of Definition~\ref{def:robust}: since sound total
orderings fail exactly at crossing pairs, the well-defined questions
become (i) does a crossing exist in a given region, and (ii) if not,
which scheme dominates. Both are decidable efficiently; we make no claim
that this decidability is itself a deep algorithmic result; the
polynomial-time bound below is an immediate consequence of casting
crossing-detection as linear-programming feasibility, and we state it
because the reduction is the useful part---it is what turns "these
schemes might disagree" into a certificate a standards body can check,
not because bounding an LP's complexity is difficult.

\subsection{Complexity of dominance and classification}

\begin{proposition}[Decidability and complexity of the
classification]\label{prop:classification}
Let $\Cone\subseteq\Cone_\TT$ be polyhedral and let
 $\mathcal{A}_{K_1},\mathcal{A}_{K_2}$ be finite attack sets with
 $m_1,m_2$ records. Then:
\emph{(i)} whether there exists $\kappa\in\Cone$ with
 $\Stil{K_1}(\kappa)<\Stil{K_2}(\kappa)$ is decidable by solving $m_1$ linear programs
\[
\min_{\mathbf c,t}\ t\quad\text{s.t.}\quad
(x^{K_1}_i-x^{K_2}_j)\cdot\mathbf c\le t\ \ \forall j\in\mathcal{A}_{K_2},
\qquad \mathbf c\in\Cone,
\]
and testing whether some optimum satisfies $t^*<0$ (with the
witness $\mathbf c^*$ the certificate);
\emph{(ii)} the classification of
Definition~\ref{def:robust} is decidable by running the test of (i) in
both directions, using $O(m_1+m_2)$ linear programs of polynomial size;
hence robust-dominance testing over the price cone reduces to checking
finitely many polyhedral extrema and is solvable in polynomial time.
\end{proposition}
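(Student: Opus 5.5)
The plan is to reduce the strict-reversal question to a disjunction of linear feasibility problems by unfolding the two minima in the definition of the profiles. The anchor term needs care: $\Stil{K_1}-\Stil{K_2}=S_{K_1}-S_{K_2}$ because both schemes are measured against the same anchor $G$, so $S_G$ cancels. With Proposition~\ref{prop:structure}(i) this gives $\Stil{K_1}(\mathbf c)<\Stil{K_2}(\mathbf c)$ iff $\min_i x^{K_1}_i\cdot\mathbf c<\min_j x^{K_2}_j\cdot\mathbf c$. This holds iff there is an index $i$ with $x^{K_1}_i\cdot\mathbf c<x^{K_2}_j\cdot\mathbf c$ for every $j$. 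Equivalently, $\max_j (x^{K_1}_i-x^{K_2}_j)\cdot\mathbf c<0$ for that $i$.

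\textbf{Step 1 (the $m_1$ LPs).} For each fixed $i$, the epigraph variable $t$ linearizes the inner maximum, which yields exactly the displayed LP. Its optimal value is $\inf_{\mathbf c\in\Cone}\max_j(x^{K_1}_i-x^{K_2}_j)\cdot\mathbf c$. So a reversal exists iff, for some $i$, some feasible $\mathbf c$ makes this maximum negative, and that $\mathbf c$ is the certificate.

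\textbf{Step 2 (well-posedness).} I expect this to be the main obstacle. $\Cone$ may be unbounded, and the objective is positively homogeneous in $\mathbf c$ only when the numéraire is ignored. Once $c_T=1$ the cone is really an affine slice, so $t^*$ may be $-\infty$. Unboundedness below is itself a witness of $t<0$, so I will read ``$t^*<0$'' as including $-\infty$ and extract a witness from any feasible point with negative objective. Degenerate ties at the boundary are handled by the strictness $t^*<0$. The equivalence between ``the infimum is $<0$'' and ``some feasible point has $t<0$'' is immediate. If the paper prefers attained optima, I can instead add a normalization box (bounded $\mathbf c$ on the region of interest), which is harmless by positive homogeneity of the differences on the recession directions.

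\textbf{Step 3 (classification).} I will run (i) in both directions, $m_1+m_2$ LPs in total, each with $n+1$ variables and $m_2$ (resp.\ $m_1$) constraints plus the constraints of $\Cone$, so each is of polynomial size. Let ``$K_2$ strictly beats $K_1$ somewhere'' and ``$K_1$ strictly beats $K_2$ somewhere'' be the two outcomes. Then incomparable is both true; dominance of $K_1$ is only the second true, and dominance of $K_2$ is only the first true; equivalence is neither true.

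Separating robust from conditional dominance requires one more test: whether the strict region is all of $\Cone$ or only a proper subregion. I would do this with the same LPs and non-strict inequality, checking whether equality $\Stil{K_1}=\Stil{K_2}$ is attainable. Concretely, run the reverse LP for ``$\Stil{K_1}\le\Stil{K_2}$ somewhere'' and test $t^*\le 0$, still $O(m_1+m_2)$ LPs. Polynomial time then follows from polynomial-time solvability of LPs (ellipsoid or interior point). I would finally remark that, by Proposition~\ref{prop:structure}(iii), when $\Cone$ is a polytope the same decisions reduce to finitely many vertex evaluations of the piecewise-linear difference on the common refinement, which matches the ``finitely many polyhedral extrema'' phrasing of the statement.
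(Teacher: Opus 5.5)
Your proposal is correct and follows the paper's proof. In both, the two minima are unfolded, with the common anchor cancelling, a point the paper leaves implicit. For each $i$ of $K_1$, $\max_j(x^{K_1}_i-x^{K_2}_j)\cdot\mathbf c$ is linearized by an epigraph variable, and the test is run in both orientations with $O(m_1+m_2)$ polynomial-size LPs.

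Your version is tidier than the paper's $\varepsilon$-slack remark in two places. First, you read $t^*=-\infty$ on the unbounded slice $c_T=1$ as a witness. Second, you add non-strict LPs to separate robust from conditional dominance, which is valid because finite LP optima are attained.

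The optional fixed normalization box is not justified by homogeneity once $c_T=1$ is imposed. A witness can exist only at large $c_M$, for example when the difference is $10-10^{-3}c_M$, so a fixed box would miss it. Keep the unbounded reading.
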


\begin{proof}[Proof sketch]
 $\Stil{K_1}(\mathbf c)<\Stil{K_2}(\mathbf c)$ iff some attack $i$ of
 $K_1$ satisfies $\mathbf c\cdot x^{K_1}_i<\min_j \mathbf c\cdot
x^{K_2}_j$, i.e., $\max_j (x^{K_1}_i-x^{K_2}_j)\cdot\mathbf c<0$;
minimizing the left side over $\Cone$ is exactly the displayed
epigraph LP. The classification combines the two directional tests;
non-strict boundaries are handled by $\varepsilon$-slack in the standard
way. Full proof in Appendix~\ref{app:proofs}.
\end{proof}

\begin{figure}[t]
\centering
\begin{tikzpicture}
\draw[->] (0,0) -- (7.0,0) node[below] {$c_M$};
\draw[->] (0,0) -- (0,4.0);
\node[left] at (0,1.2) {$0$};
\draw[dashed] (0,1.2) -- (6.6,1.2);
\draw[thick] (0,0.48) -- (6.6,3.65);
\fill (1.50,1.2) circle (1.7pt);
\draw[dotted] (1.50,0) -- (1.50,1.2);
\node[below] at (1.50,-0.06) {\scriptsize $c_M^{*}=0.045$};
\node[below] at (0.05,-0.06) {\scriptsize Core-SVP};
\node[below] at (6.3,-0.06) {\scriptsize $\to$ TM};
\node[align=center, font=\scriptsize] at (0.95,2.75)
{$\Stil{\text{ML-KEM-768}}>\Stil{\text{AES-192}}$\\ (ML-KEM dominates)};
\node[align=center, font=\scriptsize] at (4.5,0.65)
{$\Stil{\text{ML-KEM-768}}<\Stil{\text{AES-192}}$\\ (AES-192 dominates)};
\node[right] at (0,3.9) {\scriptsize $\Stil{\text{ML-KEM}}-\Stil{\text{AES}}$ (bits)};
\end{tikzpicture}
\caption{The price-cone region map for the pair (ML-KEM-768,
AES-192) on the classical memory-price slice. The anchored difference
is $-6+132.18\,c_M$; the ordering reverses across $c_M^{*}=0.045$. The
pair is incomparable on any region containing both conventions
(Proposition~\ref{prop:classification}, Theorem~\ref{thm:ordering}).}
\label{fig:regions}
\end{figure}

\section{The Rating Trilemma}\label{sec:trilemma}

A \emph{rating} assigns to each scheme a real number determined by its
profile, inducing an order $K\succeq_R K'$ iff $R(K)\ge R(K')$. Three
properties are natural:

\begin{axiom}[P1: Faithfulness]\label{ax:faith}
If $\Stil{K}(\kappa)\ge\Stil{K'}(\kappa)$ for all $\kappa\in\KK$, with
strict inequality somewhere, then $R(K)\ge R(K')$.
\end{axiom}
\begin{axiom}[P2: Totality]\label{ax:total}
 $\succeq_R$ is a total preorder: every pair of schemes receives an
ordering.
\end{axiom}
\begin{axiom}[P3: Model independence]\label{ax:obj}
 $R$ is invariant under the \emph{convention group} $\mathscr G$:
(i) relabelings and rescalings of resource prices within the ledger
(projectively), (ii) permutations of ledger resources (including the
choice of num\'eraire), and (iii) coherent extensions
 $\TT\to\TT'$ of the ledger (adjoining further physical resources).
\end{axiom}

P2 is automatic for scalar ratings---which is precisely the problem: the
underlying order (Section~\ref{sec:ordering}) is \emph{not} total once
crossings exist, and Theorem~\ref{thm:trilemma} identifies the price
totality exacts.

\begin{lemma}[Mixture representation]\label{lem:mixture}
Let $R$ be linear in the profile and normalized. Then $R$ satisfies
P1 iff there exists a probability vector $\mu$ on $\KK$ with
\[
R(K)\;=\;\sum_{\kappa\in\KK}\mu_\kappa\,\Stil{K}(\kappa)
\;=\;\Ex_{\kappa\sim\mu}\!\big[\Stil{K}(\kappa)\big].
\]
\end{lemma}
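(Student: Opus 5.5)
We treat $\KK$ as finite (as the sum in the statement presupposes), so a profile is a vector $\Stil{K}=(\Stil{K}(\kappa))_{\kappa\in\KK}\in\mathbb{R}^{\KK}$. We read ``linear in the profile'' as $R(K)=\ell(\Stil{K})$ for a linear functional $\ell$ on $\mathbb{R}^{\KK}$. Every such $\ell$ has the form $\ell(v)=\sum_\kappa w_\kappa v_\kappa$ for some weight vector $w$. We read ``normalized'' as $\ell(\mathbf 1)=1$: a profile that is uniformly one bit above the anchor is rated one bit. With these readings the lemma says that $R$ satisfies P1 iff $w\ge 0$. Given $w\ge0$, the normalization $\sum_\kappa w_\kappa=1$ makes $w$ a probability vector $\mu$, and the displayed expectation formula follows.

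\textbf{Sufficiency.} Suppose $R(K)=\Ex_\mu[\Stil{K}]$ with $\mu\ge 0$. If $\Stil{K}(\kappa)\ge\Stil{K'}(\kappa)$ for all $\kappa$, then
\[
R(K)-R(K')=\sum_\kappa\mu_\kappa\big(\Stil{K}(\kappa)-\Stil{K'}(\kappa)\big)\ge 0.
\]
So P1 holds, and in fact without needing the strictness clause.

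\textbf{Necessity.} Suppose P1 holds and some weight is negative, say $w_{\kappa_0}<0$. We build two profiles $v'$ and $v=v'+\varepsilon e_{\kappa_0}$ with $\varepsilon>0$. Then $v\ge v'$ with strict inequality at $\kappa_0$, but
\[
\ell(v)-\ell(v')=\varepsilon\,w_{\kappa_0}<0,
\]
which violates P1.

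\textbf{Main obstacle.} The contrapositive only works if P1 can be applied to $v$ and $v'$, so both must be anchor-relative profiles of actual schemes. Realizability is the hard step. Proposition~\ref{prop:structure} shows profiles are concave, piecewise-linear functions of $\mathbf c$, so not every vector in $\mathbb{R}^{\KK}$ need arise. We plan to handle this in two ways.

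\emph{Enlarge the domain.} First we argue that the domain of schemes is rich enough to contain, with every realized profile, its perturbation by $\varepsilon e_{\kappa_0}$. For a finite $\KK$ of distinct price vectors in general position, we can adjoin or modify a single attack record whose price is minimal only at $\kappa_0$. Such a record exists because $\kappa_0$ is an extreme point of the set being separated, or by taking $\KK$ to be the vertices of the region, as in Proposition~\ref{prop:structure}(iii). By knowledge monotonicity (Proposition~\ref{prop:monotone}(ii)), this lowers $\Stil{K}$ at $\kappa_0$ alone.

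\emph{Fallback.} If general position fails, we use the fact that, since $\ell$ is linear, it suffices for the realizable profile differences to span a cone whose dual is the nonnegative orthant.

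Finally, we will note the degenerate case $\sum_\kappa w_\kappa=0$. It cannot occur under normalization, which also excludes the trivial rating $w=0$. This is the only place normalization is used, beyond fixing the total mass to $1$.
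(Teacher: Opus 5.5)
Your argument is essentially the paper's proof: represent the linear rating by a weight vector on the finite-dimensional profile space, get $\mu_\kappa\ge 0$ by applying P1 to two profiles that differ only at a single $\kappa$, get $\sum_\kappa\mu_\kappa=1$ from normalization, and note that the converse is immediate. The paper simply treats all of $\mathbb{R}^{\KK}$ as the profile space and never raises your realizability issue; if you want to settle it, do not rely on $\kappa_0$ being an extreme point, because general position does not give convex position. Instead, start from an attack set in which every model has a unique cheapest attack that is strictly suboptimal at all other models (e.g.\ attacks given by supporting hyperplanes of a strictly concave function at the distinct price points), so that perturbing that one attack slightly moves the profile at a single model only.
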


\begin{proof}
Linearity on the finite-dimensional profile space gives a representing
vector $\mu$. P1 is monotonicity along every nonnegative direction;
testing on pairs differing only at a single $\kappa$ forces
 $\mu_\kappa\ge0$, and normalization gives $\sum_\kappa\mu_\kappa=1$.
\end{proof}

\begin{lemma}[Symmetry rigidity]\label{lem:symmetry}
Let $\mu$ be a Borel probability measure on the projective price space
invariant under the rescaling flow of every resource $r$ ($\omega_r\mapsto e^t\omega_r$, renormalized). Then $\mu$ is supported on
the pure single-resource price vectors $\{e_r\}$. If $\mu$ is
additionally invariant under ledger permutations, it is uniform over
them.
\end{lemma}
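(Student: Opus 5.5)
We work on the projective price space, i.e.\ the simplex $\Delta=\{\omega\in\mathbb{R}^n_{+}:\sum_r\omega_r=1\}$. For each resource $r$ the rescaling flow is
\[
\phi^r_t(\omega)=\frac{(\omega_1,\dots,e^t\omega_r,\dots,\omega_n)}{\sum_{s\ne r}\omega_s+e^t\omega_r}.
\]
The plan has three parts: show that every flow drives mass to the boundary, combine the flows to push all mass onto the vertices $\{e_r\}$, and then apply permutation invariance.

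\textbf{Step 1: one flow at a time.} Fix $r$ and write $\omega_r$ for the $r$-th coordinate. Along an orbit of $\phi^r_t$ we have
\[
\omega_r(t)=\frac{e^t\omega_r}{e^t\omega_r+(1-\omega_r)}.
\]
This is strictly increasing in $t$ whenever $0<\omega_r<1$, tends to $1$ as $t\to\infty$, and tends to $0$ as $t\to-\infty$.

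Take the band $B=\{\omega:\ a<\omega_r\le b\}$ with $0<a<b<1$. Choose $t_0>0$ large enough that $\phi^r_{t_0}(B)$ is disjoint from $B$; this is possible because the flow moves the logit of $\omega_r$ by exactly $t_0$ and $B$ has bounded logit range. The translates $\phi^r_{kt_0}(B)$, $k\in\mathbb{Z}$, are then pairwise disjoint. By invariance they all have measure $\mu(B)$, and since $\mu$ is a probability measure this forces $\mu(B)=0$. Exhausting $\{0<\omega_r<1\}$ by countably many such bands gives
\[
\mu\bigl(\{0<\omega_r<1\}\bigr)=0,
\]
so $\mu$ is concentrated on $\{\omega_r\in\{0,1\}\}$.

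\textbf{Step 2: intersect over all $r$.} The conclusion of Step 1 holds for every $r$, so $\mu$ is carried by
\[
\bigcap_r\{\omega_r\in\{0,1\}\}.
\]
Because the coordinates of a point of $\Delta$ sum to $1$, this intersection is exactly the vertex set $\{e_1,\dots,e_n\}$. This is the first claim.

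I expect the main obstacle to be bookkeeping rather than substance. The flow must be defined on the closed simplex including its faces; there it fixes the faces $\{\omega_r=0\}$ and $\{\omega_r=1\}$, which is consistent with the argument. The wandering-set argument also needs $\phi^r_t$ to be a measurable bijection, which holds since it is a homeomorphism of $\Delta$.

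\textbf{Step 3: permutations.} Write $\mu=\sum_r p_r\delta_{e_r}$. A ledger permutation $\pi$ maps $e_r$ to $e_{\pi(r)}$, so invariance under all permutations gives $p_r=p_{\pi(r)}$ for every $\pi$. Transitivity of the symmetric group then forces $p_r=1/n$ for all $r$. If only a subgroup of permutations is admitted, the same argument gives uniformity on each orbit. I would state the result for the full permutation group, as in Axiom~P3(ii).
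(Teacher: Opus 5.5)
Your proof is correct and follows the paper's route. Each rescaling flow pushes points with $0<\omega_r<1$ monotonically toward $e_r$, so an invariant $\mu$ puts no mass there; intersecting over $r$ leaves the vertices, and transitivity of permutations gives uniformity. The only difference is that the paper cites Poincar\'e recurrence, while you give the direct wandering-set argument in the logit coordinate, which is the mechanism behind recurrence and makes that step self-contained.
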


\begin{proof}[Proof sketch]
By Poincar\'e recurrence, $\mu$-a.e.\ point is recurrent under the
time-one map of each flow; any point with two strictly positive
coordinates flows forward to a pure vertex and backward to a coordinate
face, hence is not recurrent. Intersecting over resources leaves exactly
the pure vectors, and permutation invariance on this transitive finite
orbit forces uniformity. Full proof in Appendix~\ref{app:proofs}.
\end{proof}

\begin{lemma}[Ledger dependence]\label{lem:ledger}
\emph{(i)} The uniform-over-pure rating of Lemma~\ref{lem:symmetry}
changes value under coherent ledger extensions whenever profiles are
non-constant on $\KK$, and such non-constancy holds for standardized
schemes (Section~\ref{sec:eval}). \emph{(ii)} Every
extension-invariant linear rating is supported on a fixed a priori set
of models, a commitment no measurement determines.
\end{lemma}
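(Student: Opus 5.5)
For part (i) I would argue by explicit construction. Let $R_0(K)=\frac1n\sum_{r=1}^n \Stil{K}(e_r)$ be the uniform-over-pure rating given by Lemma~\ref{lem:symmetry} on the ledger $\TT$ with $n$ resources. Adjoin a coherent new resource $R_{n+1}$ to get $\TT'=\TT\cup\{R_{n+1}\}$. Each log-resource vector $x_A$ then gains a coordinate $x_{A,n+1}$, and by Lemma~\ref{lem:symmetry} applied to $\TT'$ the rating becomes
\[
R_1(K)=\frac{n}{n+1}R_0(K)+\frac{1}{n+1}\Stil{K}(e_{n+1}).
\]
So $R_1(K)=R_0(K)$ holds iff $\Stil{K}(e_{n+1})=R_0(K)$, that is, iff the value at the new pure vertex equals the average of the old ones.

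To make this value disagree, I would choose the new coordinate freely. A coherent extension fixes only the old coordinates, so $x_{A,n+1}$ can be any value consistent with the attack. Non-constancy of $\Stil{K}$ on $\KK$ means the old vertex values are not all equal. I would then pick the extension to be, for example, a duplicate of a resource $r$ whose vertex value differs from the mean, such as energy tracking memory. Then $\Stil{K}(e_{n+1})=\Stil{K}(e_r)\neq R_0(K)$, so the rating changes.

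For the standardized-scheme instance I would simply cite Section~\ref{sec:eval}. Example~\ref{fig:profiles} already gives ML-KEM-768 different values under Core-SVP and TM, so the profile is non-constant.

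For part (ii) I would combine Lemma~\ref{lem:mixture} with the same mechanism. Suppose a linear rating has representing measure $\mu$ on $\KK\subseteq\Cone_{\TT}$, and consider its image after extension. Invariance for all coherent extensions and all schemes requires $\Ex_\mu[\Stil{K}]$ to be unchanged for every possible assignment of the new coordinate. Since new-coordinate data can be chosen arbitrarily, with profile values on models that put positive weight on the new resource varying independently, the extended measure must give zero weight to every model with $c_{n+1}>0$. Iterating over extensions, $\mu$ has to be the embedded copy of a measure on the original ledger, fixed before any extension. That is exactly a fixed a priori support set.

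The \emph{no measurement determines} clause I would justify by a pair argument. Take two distinct supports $\mu\neq\mu'$ that are both extension-invariant. Every attack record, being a log-resource vector, is equally consistent with both, so the data cannot select between them.

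The main obstacle is making precise what a coherent extension allows. The argument in (ii) needs the freedom to vary new coordinates independently of existing data, and the paper's Definition~\ref{def:ledger} only gestures at this. I would first try to state it as an explicit convention: an extension $\TT\to\TT'$ is coherent if it restricts to the identity on $\Cone_\TT$, the face $c_{n+1}=0$, and places no constraint on $x_{A,n+1}$. With that stated, both parts reduce to the linear-algebra observations above.
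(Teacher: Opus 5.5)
Your argument follows the paper's proof. In (i) the weights drop from $1/n$ to $1/(n+1)$ and a new vertex term enters. In (ii) invariance forces the mixture to put no mass on models that price the new resource, which leaves a support fixed in advance. In (i) you are more careful than the paper, whose proof asserts that the value changes ``whenever the profile is non-constant''. Your identity shows $R_1(K)=R_0(K)$ iff $\Stil{K}(e_{n+1})=R_0(K)$, so non-constancy alone cannot force a change under every extension. What is provable is the existential reading (some coherent extension changes the value), and choosing the extension is the right move. In (ii) the zero-mass step becomes a real proof if you test on a single-attack scheme with record $(x_{\mathrm{old}},y)$: its rating under the extended measure $\mu'$ is affine in $y$ with slope $\Ex_{\mu'}[c_{n+1}]$, and since $c_{n+1}\ge0$, invariance in $y$ forces $\mu'(\{c_{n+1}>0\})=0$.

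One step fails as written. Non-constancy of $\Stil{K}$ on $\KK$ does not make the old vertex values $\Stil{K}(e_r)$ unequal, and that is exactly what your duplicate-resource construction needs. Under the num\'eraire normalization $e_T$ is the only pure vector in $\Cone_\TT$, so non-constancy on $\KK$ says nothing directly about the values at $e_M,e_Q,\dots$ that the rating averages. A pointwise minimum can also agree at every vertex while varying in between. On the ledger $(T,M)$, attacks $(10,0)$ and $(0,10)$ against an anchor with $S_G\equiv0$ give $\Stil{K}=\min(10c_T,10c_M)$. This is $0$ at $e_T$ and $e_M$ but $10$ at TM, so duplicating either resource leaves the rating at $0$. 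Your Core-SVP-versus-TM citation therefore does not supply the resource $r$.

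For the standardized instance, check the vertices directly. Take $K=$ ML-KEM-768 with record $(186,132.18,0,0,0,0)$ and a memory-free AES-192 anchor. Then $\Stil{K}(e_T)=-6$, $\Stil{K}(e_M)\approx132.18$, and the other four vertices give $0$. Every vertex value differs from the mean $\approx21.03$, so duplicating any resource changes the rating. Alternatively, if you grant full freedom in $x_{A,n+1}$, you can make $\Stil{K}(e_{n+1})\neq R_0(K)$ directly without any non-constancy hypothesis. That shows the operative condition is the value at the new vertex, not non-constancy on $\KK$.
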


\begin{theorem}[Rating Trilemma]\label{thm:trilemma}
Within the class of ratings that are monotone aggregators of the
profile---in particular all linear ratings and the minimax rating---no
rating satisfies P1, P2, and P3 simultaneously once the admissible set
contains schemes with crossing profiles. Concretely:
\emph{(a)} by Lemma~\ref{lem:mixture}, every total faithful linear
rating is a probability mixture over adversary cost models;
\emph{(b)} by Lemmas~\ref{lem:symmetry}--\ref{lem:ledger}, P3's
rescaling-and-permutation invariance forces the uniform-over-pure
mixture, which is not extension-invariant, while extension-invariant
mixtures require an a priori support commitment;
\emph{(c)} the minimax rating $R=\min_\kappa\Stil{K}(\kappa)$ is
permutation-invariant but strictly decreases under ledger extensions
that add a model where the profile is smaller, violating P3(iii).
Consequently every faithful total rating encodes a choice of
distribution over---or selection among---adversary cost models; that
choice is a modeling decision, not a measurement.
\end{theorem}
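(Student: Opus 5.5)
The proof splits along the class of monotone aggregators, handling the linear case and the minimax case separately. In each case we assume P1 and P2 hold, exhibit a convention-group transformation under which the rating changes, and so refute P3. The crossing hypothesis matters because it makes profiles non-constant across $\KK$; without crossings, any pair is either robustly dominated or measurement-equivalent, and a faithful total rating could simply follow that order.

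\textbf{Linear ratings.} First normalize $R$ so that the constant profile $\mathbf 1$ receives rating $1$. This is legitimate because the anchor-relative profile is defined only up to the anchor, by Proposition~\ref{prop:monotone}(iii). Lemma~\ref{lem:mixture} then writes $R(K)=\Ex_{\kappa\sim\mu}[\Stil{K}(\kappa)]$ for a probability vector $\mu$, which is part~(a). P2 holds automatically, since $R$ is real-valued.

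Next, impose P3(i)--(ii). Invariance under projective rescalings and permutations of the ledger passes to invariance of $\mu$ under the rescaling flows, viewed as a pushforward on the projective price space. Lemma~\ref{lem:symmetry} then forces $\mu$ to be uniform on the pure vectors $\{e_r\}_{r\in\TT}$.

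Now apply a coherent extension $\TT\to\TT'$ that adjoins a resource $r'$. The uniform-over-pure rating becomes $\frac{1}{n+1}\sum_{r\in\TT'}\Stil{K}(e_r)$, whereas before it was $\frac1n\sum_{r\in\TT}\Stil{K}(e_r)$. The two agree only if $\Stil{K}(e_{r'})$ equals the old average, and this fails for some scheme once profiles are non-constant. Lemma~\ref{lem:ledger}(i) supplies this non-constancy, and the crossing pair provides it concretely. By Lemma~\ref{lem:ledger}(ii), any remaining extension-invariant mixture must fix its support a priori, and such a mixture cannot also satisfy the permutation invariance of P3(ii) unless that support is the pure set. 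This closes part~(b).

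\textbf{Minimax rating.} Part~(c) is direct. The rating $R(K)=\min_\kappa\Stil{K}(\kappa)$ is faithful, because pointwise dominance implies dominance of minima. It is total, and it is permutation-invariant because the minimum over a set ignores labeling. Now extend the ledger so that the admissible set gains a model $\kappa'$ with $\Stil{K}(\kappa')<\min_{\KK}\Stil{K}$. Such a model exists, for instance by pricing a resource that the dominant attack on $K$ uses but the anchor does not. Then $R(K)$ strictly decreases, which violates P3(iii).

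For a general monotone aggregator, I would reduce to these two prototypes. The plan is to show that P3 invariance plus faithfulness forces the aggregator to depend on the profile only through its values on an orbit-invariant set of models, and then to repeat the extension argument on that set.

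The main obstacle is the interface with Lemma~\ref{lem:symmetry}. That lemma is stated for measures on a continuum projective price space, while Lemma~\ref{lem:mixture} produces a vector on a finite $\KK$. The flows also need not preserve $\KK$. I would first embed $\KK$ in the projective space and extend $R$ to all of it by taking the closure of $\KK$ under $\mathscr G$, so that invariance becomes meaningful. A second delicate point is the conclusion ``uniform-over-pure changes under extension'': this needs some scheme whose value on the new pure model differs from its current average. I would secure it by citing Lemma~\ref{lem:ledger}(i) together with the crossing hypothesis, rather than proving it afresh.
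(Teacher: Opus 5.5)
Your decomposition matches the paper's appendix assembly: Lemma~\ref{lem:mixture}, then Lemma~\ref{lem:symmetry}, then Lemma~\ref{lem:ledger} for linear ratings, with minimax handled separately. Your worry about the interface is legitimate and the paper passes over it: Lemma~\ref{lem:symmetry} concerns measures on the continuum simplex, Lemma~\ref{lem:mixture} yields a vector on the finite $\KK$, and the rescaling flows need not preserve $\KK$.

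\textbf{Part~(c): the witness has the wrong sign.} Work with nonnegative log-resource vectors, the setting in which Proposition~\ref{prop:monotone}(i) holds. Adjoin a priced resource that the cheapest attack on $K$ consumes but the anchor does not. Relative to the model being extended, this can only raise $S_K$, and it leaves $S_G$ fixed. So $\Stil{K}$ weakly \emph{increases} there, $\min_\kappa\Stil{K}(\kappa)$ does not move, and P3(iii) is not violated. A strict decrease needs the opposite configuration: a resource that the anchor's reference attack consumes more heavily than $K$'s cheapest attack at the minimizing model, so that $S_G$ grows faster than $S_K$. The paper never claims such a model exists; it states~(c) conditionally, for extensions that add a model with smaller profile value. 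Either adopt that conditional form or give a correctly oriented witness.

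\textbf{General monotone aggregators and normalization.} You cannot reduce an arbitrary monotone aggregator to the two prototypes. The paper's proof treats only linear and minimax ratings, and Section~\ref{sec:limitations} leaves the general nonlinear case open. Without a nondegeneracy hypothesis the claim fails even for linear ratings. The constant rating $R\equiv0$ is linear and monotone, satisfies P1 as written (which asks only for $R(K)\ge R(K')$), and satisfies P2 and P3 trivially.

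This is exactly the weight your normalization step carries, and your justification for it is wrong. Proposition~\ref{prop:monotone}(iii) concerns rescaling the effort statistic inside $\Stil{K}$, not rescaling $R$. The correct reason is that positive rescaling of $R$ preserves the induced order, which makes normalizing harmless exactly when $R(\mathbf 1)>0$. If $R(\mathbf 1)=0$, P1 gives $\mu\ge0$ with total mass $0$, i.e.\ $R\equiv0$. That case must be excluded by hypothesis (the ``normalized'' of Lemma~\ref{lem:mixture}) or by making P1 strict.

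\textbf{The ledger-extension step.} Your inline claim is a non sequitur: non-constancy of profiles does not force $\Stil{K}(e_{r'})\ne\frac1n\sum_{r\in\TT}\Stil{K}(e_r)$ for some scheme. The value at a newly adjoined pure model is unconstrained by the values on the old $\KK$. That step therefore rests entirely on Lemma~\ref{lem:ledger}(i) used as a black box, just as in the paper.
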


\begin{proof}
Assemble (a)--(c): a P1--P2 linear rating is a mixture
(Lemma~\ref{lem:mixture}); P3(i)--(ii) forces uniform-over-pure
(Lemma~\ref{lem:symmetry}); P3(iii) then fails by
Lemma~\ref{lem:ledger}(i), while evading via
Lemma~\ref{lem:ledger}(ii) contradicts the invariance requirement
itself. The minimax case is (c). Full assembly in
Appendix~\ref{app:proofs}.
\end{proof}

\begin{corollary}[Rating-as-distribution]\label{cor:dist}
Every faithful, total, linear rating with normalization admits the
interpretation of an expectation under a prior $\mu$ over $\KK$; two
raters who disagree do so only through their $\mu$.
\end{corollary}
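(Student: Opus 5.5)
The corollary follows from Lemma~\ref{lem:mixture}. Essentially all the work is in stating the hypotheses so that the lemma applies verbatim; then we read off the ``disagreement'' clause.

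\emph{Setup.} Fix a rating $R$ that is faithful (P1), total (P2), linear in the anchor-relative profile $\kappa\mapsto\Stil{K}(\kappa)$, and normalized. Totality is automatic for a real-valued rating, since $\succeq_R$ is pulled back from the order on $\mathbb{R}$. So the binding hypotheses are linearity, normalization and P1. Take $\KK$ finite, as in the lemma, so that the profile space is $\mathbb{R}^{\KK}$. Linearity then gives a representing vector $\mu$ with $R(K)=\sum_\kappa \mu_\kappa\Stil{K}(\kappa)$.

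\emph{Nonnegativity and the expectation form.} To get $\mu_\kappa\ge 0$ I would use the lemma's argument. Take two profiles differing only by a positive bump at a single $\kappa$. P1 then gives $\mu_\kappa\ge 0$ for that $\kappa$. Normalization gives $\sum_\kappa\mu_\kappa=1$, so $\mu$ is a prior on $\KK$ and $R(K)=\Ex_{\kappa\sim\mu}[\Stil{K}(\kappa)]$.

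\emph{Uniqueness and disagreement.} The second clause is the only point needing care. Two raters $R,R'$ ``disagree'' if they order some pair $K,K'$ differently. If both have the representation above with priors $\mu,\mu'$ and $\mu=\mu'$, then $R=R'$ as functions on profiles, so they induce the same order and cannot disagree. Contrapositively, disagreement forces $\mu\neq\mu'$.

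To say disagreement is attributable to $\mu$ \emph{alone}, I also need $\mu$ to be uniquely determined by $R$. This holds whenever the realized profiles span $\mathbb{R}^{\KK}$, or at least the span the rating is evaluated on. I expect this to be the main obstacle: the realizable anchor-relative profiles are concave piecewise-linear functions restricted to $\KK$, and their span is not obviously full. I would sidestep this by stating uniqueness modulo the annihilator of the profile span. Two priors that agree on that span give identical ratings. So the identity of a rater is exactly its class of $\mu$, and any disagreement is a difference in that class.

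Finally, the scope. Nothing here uses P3, so the statement is as weak as the corollary claims: every faithful linear rating is an expectation under a prior over cost models. Combined with Theorem~\ref{thm:trilemma}, no choice of that prior is forced by measurement.
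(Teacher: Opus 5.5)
Your proposal is correct and follows the paper's route: the corollary is read off directly from Lemma~\ref{lem:mixture}. You get the linear representation, nonnegativity of each $\mu_\kappa$ from single-$\kappa$ bumps under P1, and unit mass from normalization, and the disagreement clause holds because $R$ depends on the profiles only through $\mu$. Your uniqueness-modulo-annihilator detour is unnecessary, since ``disagree only through $\mu$'' needs only that equal priors give equal ratings, and the bump argument you already rely on puts every $e_\kappa$ in the span of the rating's domain, so that span is full and $\mu$ is unique anyway.
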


\begin{figure}[t]
\centering
\begin{tikzpicture}[scale=1.15]
\coordinate (A) at (90:2.6);
\coordinate (B) at (210:2.6);
\coordinate (C) at (330:2.6);
\draw[thick] (A) -- (B) -- (C) -- cycle;
\fill (A) circle (1.7pt) node[above=2pt] {\textbf{P1} Faithfulness};
\fill (B) circle (1.7pt) node[below left=1pt] {\textbf{P2} Totality};
\fill (C) circle (1.7pt) node[below right=1pt] {\textbf{P3} Model independence};
\node[align=center, font=\scriptsize] at (90:1.2) {impossible\\ (Theorem~\ref{thm:trilemma})};
\node[align=center, font=\scriptsize] at ($(A)!0.5!(B)$) [left=4pt]
{drop P3: rating is a mixture\\ over cost models (Lemma~\ref{lem:mixture})};
\node[align=center, font=\scriptsize] at ($(B)!0.5!(C)$) [below=8pt]
{drop P1: uniform-over-pure, ledger-dependent (Lemmas~\ref{lem:symmetry}--\ref{lem:ledger})};
\node[align=center, font=\scriptsize] at ($(A)!0.5!(C)$) [right=4pt]
{drop P2: unanimity order,\\ partial (Corollary~\ref{cor:bewley})};
\end{tikzpicture}
\caption{The Rating Trilemma. Each edge is a viable research program
obtained by relaxing one axiom; the center is impossible in the presence
of crossing profiles, which Section~\ref{sec:eval} exhibits among
standardized schemes.}
\label{fig:trilemma}
\end{figure}

\begin{remark}[Social choice]\label{rem:arrow}
With schemes as alternatives and cost models as voters, crossing
profiles generate Condorcet-type cycles among total ratings; the
trilemma is an Arrow-flavored phenomenon~\cite{arrow51} in measurement
form. We note the correspondence without claiming a formal equivalence.
\end{remark}

\section{Security Uncertainty and Temporal Evolution}\label{sec:risk}

\subsection{Uncertainty channels}

GAUGE separates six channels: \textbf{U1} cost-model choice (systematic);
\textbf{U2a} cryptanalytic \emph{drift} (gradual improvement within the
known attack set, e.g., sieve-exponent progress); \textbf{U2b}
cryptanalytic \emph{jumps} (new attack families, e.g.,
\cite{castryck23,beullens22}); \textbf{U3} hardware drift (relative
resource prices); \textbf{U4} quantum-arrival timing; \textbf{U5}
implementation risk (e.g., \cite{kyberslash}); \textbf{U6} reduction-loss
uncertainty~\cite{br96,km05}. U2--U4 are stochastic (Type~A); U1 and U6
are systematic (Type~B); U5 straddles both~\cite{gum}.

The price-cone geometry of
Sections~\ref{sec:models}--\ref{sec:trilemma} concerns channel
\textbf{U1} alone: it takes the set of known attacks as given and
studies what happens when the \emph{prices} of their resources change.
Every historical failure in the CryChron chronology---SHA-1, SIDH,
Matsumoto--Imai, Rainbow---was not a repricing of a known attack but the
arrival of a previously unknown one (channel \textbf{U2b}). Channel U2b
is handled separately through the survival model of
Section~\ref{sec:dynamics}. The geometric layer addresses the distinct
question of whether two schemes can be ranked inconsistently depending
on an accounting convention, independent of any new cryptanalysis.

\subsection{A two-layer risk measure}

The two mathematically distinct uncertainties must not be conflated:
\emph{given} a cost model, decay is stochastic; \emph{which} cost model
is appropriate is ambiguity~\cite{walley91,gs89,kmk05}. For scheme $K$ with target level $s^*$, let the loss process be
 $L_K(t)=[\,s^*-\Stil{K}(t)\,]^+$ under the decay processes of its
assumption families.

\begin{definition}[Credal set; trust budget; GAUGE-risk]\label{def:risk}
Let $\CC_\rho=\{\mu:D_{\mathrm{KL}}(\mu\|\mu_0)\le\rho\}$ be a credal
set of priors over $\KK$ around a reference prior $\mu_0$ (\emph{trust budget} $\rho\ge0$, in nats). For horizon $\tau$ and
confidence $\delta$:
\[
\rho(K;\tau,\delta,\rho)\;=\;\sup_{\mu\in\CC_\rho}\
\CVaR^{\mu}_{\delta}\big(L_K(\tau)\big).
\]
\end{definition}

\begin{proposition}[Two-layer risk and its properties]\label{prop:risk}
\emph{(i) Coherence:} $\rho(K)$ is a coherent risk measure
(monotone, positively homogeneous, translation-invariant,
subadditive)~\cite{artzner99,ru00}.
\emph{(ii) Decomposition:} the inner $\CVaR$ aggregates stochastic decay
given a cost model; the outer supremum aggregates systematic ambiguity
across cost models; $\rho=0$ recovers the single-model risk, and the
trust budget prices the Type-B uncertainty of the apparatus.
\emph{(iii) Tractability:} on finite scenario spaces, $\rho$ is
computable by convex programming; with KL balls, via the dual
\[
\rho(K)=\min_{t}\Big\{t+\tfrac1\delta\inf_{\beta>0}\big[\beta\rho+
\beta\log\Ex_{\mu_0}e^{(L-t)^+/\beta}\big]\Big\},
\]
by Sion's minimax theorem~\cite{sion58} and the variational dual of the
KL ball~\cite{ek18}.
\end{proposition}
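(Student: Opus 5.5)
The plan is to view $\rho(K)$ as a supremum of coherent risk measures over a convex, compact family of probability laws, and then obtain the dual formula from two standard dualities. First I fix the probability space. On a finite scenario space $\Omega=\KK\times\Xi$, the coordinate $\xi\in\Xi$ indexes the stochastic decay scenarios at horizon $\tau$. For each prior $\mu$ on $\KK$, the law $\Prob_\mu=\mu\otimes\pi(\cdot\mid\kappa)$ combines $\mu$ with a fixed decay kernel $\pi$. The map $\mu\mapsto\Prob_\mu$ is affine. The credal set $\CC_\rho$ is a closed, convex subset of the simplex, since $D_{\mathrm{KL}}(\cdot\|\mu_0)$ is convex and lower semicontinuous. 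It is therefore compact. The loss $L_K(\tau)=[s^*-\Stil{K}]^+$ is bounded on the finite space, so every quantity below is finite.

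\emph{Step 1 (coherence, part (i)).} For each fixed $\mu$, $\CVaR^{\mu}_{\delta}$ is a coherent risk measure on bounded losses; this is the result of Artzner et al.\ and Rockafellar--Uryasev. I then check that the four axioms survive a pointwise supremum over $\mu\in\CC_\rho$:
\begin{itemize}[leftmargin=1.5em,itemsep=0pt]
\item monotonicity and positive homogeneity are preserved by any supremum;
\item translation invariance is preserved because $\CVaR^\mu_\delta(L+a)=\CVaR^\mu_\delta(L)+a$ with the same constant $a$ for every $\mu$;
\item subadditivity follows from $\sup_\mu(f_\mu+g_\mu)\le\sup_\mu f_\mu+\sup_\mu g_\mu$.
\end{itemize}
Equivalently, $\rho$ has the dual representation $\sup_{Q\in\mathcal Q}\Ex_Q[L]$, where $\mathcal Q$ is the union over $\mu\in\CC_\rho$ of the $\CVaR$ density sets of $\Prob_\mu$.

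\emph{Step 2 (decomposition, part (ii)).} This part is largely structural. The inner functional depends on $L$ only through the conditional decay law given $\kappa$, averaged by $\mu$. The outer supremum ranges only over priors on $\KK$. At trust budget $0$, $D_{\mathrm{KL}}(\mu\|\mu_0)=0$ forces $\mu=\mu_0$, so $\rho$ reduces to $\CVaR^{\mu_0}_\delta$, the single-model risk. Monotonicity of $\rho$ in the budget, which holds because the balls are nested, gives the interpretation of the budget as a price on Type-B uncertainty.

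\emph{Step 3 (tractability, part (iii)).} First I write $\CVaR^\mu_\delta(L)=\min_t\{t+\tfrac1\delta\Ex_{\Prob_\mu}(L-t)^+\}$ using Rockafellar--Uryasev. The minimizing $t$ can be restricted to the compact interval $[0,\max L]$. The objective is affine in $\mu$ and convex in $t$, and both domains are compact and convex. Sion's theorem therefore lets me exchange $\sup_\mu$ and $\min_t$.

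For fixed $t$, I set $g_t(\kappa)=\Ex_{\pi(\cdot\mid\kappa)}(L-t)^+$. I then apply the variational dual of the KL ball, in the Donsker--Varadhan/Esfahani--Kuhn form:
\[
\sup_{\mu\in\CC_\rho}\Ex_\mu[g]=\inf_{\beta>0}\big[\beta\rho+\beta\log\Ex_{\mu_0}e^{g/\beta}\big].
\]
This identity is proved by Lagrangian duality on the concave program $\max_\mu\{\Ex_\mu g: D_{\mathrm{KL}}(\mu\|\mu_0)\le\rho\}$. Slater's condition holds whenever $\rho>0$, with $\mu_0$ itself a strictly feasible point. The inner maximization over $\mu$ yields the Gibbs tilt $\mu\propto\mu_0e^{g/\beta}$. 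The case $\rho=0$ is recovered as the limit $\beta\to\infty$.

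The main obstacle is reconciling this with the displayed formula. That formula has $(L-t)^+$ inside the exponential rather than its conditional expectation $g_t$. The two agree when decay is deterministic given $\kappa$, or when the KL ball is imposed on the joint law $\Prob_\mu$ rather than on the prior alone. I would state the formula under one of these conventions, preferring the joint-law version. I would note that otherwise Jensen's inequality makes the displayed expression an upper bound on $\rho$, which is still a valid convex program.
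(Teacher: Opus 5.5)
Your proposal follows the paper's own proof essentially step for step: coherence by passing the four axioms through the pointwise supremum of the coherent $\CVaR^\mu_\delta$, and the dual formula by combining the Rockafellar--Uryasev representation, a Sion exchange of $\min_t$ and $\sup_\mu$ over the compact convex credal set $\CC_\rho$, and the KL-ball variational dual. Your caveat is correct and identifies a point the paper's proof passes over: it writes $\Ex_\mu[(L-t)^+]$ and applies the dual to $Z=(L-t)^+$ as though $\mu$ were the law of $L$ itself, whereas with a ball on priors only the exponent should be $g_t(\kappa)$, and Jensen then makes the display an upper bound; note, however, that your preferred joint-law ball lets the outer supremum tilt the decay kernel as well, which weakens the layer separation claimed in (ii), so the prior-only ball with the display read as an upper bound is the reading most faithful to (ii).
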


\begin{proof}[Proof sketch]
Each $\CVaR^\mu_\delta$ is coherent; a pointwise supremum of coherent
measures preserves monotonicity, homogeneity, and translation
invariance, and subadditivity follows from subadditivity of each
 $\CVaR$ plus $\sup_\mu[f+g]\le\sup_\mu f+\sup_\mu g$. The dual is the
Rockafellar--Uryasev representation of $\CVaR$ composed with the KL-ball
variational form. Full proof in Appendix~\ref{app:proofs}.
\end{proof}

The corresponding claim format for specification sheets is:
\emph{``$K$ retains $\ge s$ anchor-relative bits at horizon $\tau$,
confidence $\delta$, within trust budget $\rho$.''} A worked illustration
with declared illustrative inputs appears in Appendix~\ref{app:extra}.

\subsection{Temporal evolution of attack effort}\label{sec:dynamics}

\paragraph{Cryptanalysis chronology.}
We curate a versioned dataset (\emph{CryChron}; full contents and
methodology in Appendix~\ref{app:data}) of $22$ assumption generations
across nine strata, with $10$ observed break (``jump'') events and $12$ right-censored generations. Observed ages at break range from $6$ to
 $22$ years. The pooled Kaplan--Meier estimator~\cite{kaplan58} crosses
 $50\%$ survival at $t=22$ years
($\hat S(22)=0.446$; Greenwood variance $0.0758$; bootstrap $95\%$ CI
 $\approx[0.26,0.77]$); the stratum-level curves for the event-free
strata (LWE-type lattices, codes, factoring, discrete logarithms) remain
at $1$ throughout the window, so their medians are not estimable---a
statement about the state of evidence. Because cryptanalytic events
cluster by technique (the 2004 Wang-cluster breaks; the
GGH/NTRUSign transcript-attack lineage), the survival model carries a
shared frailty per technique lineage.

\paragraph{Construction and scope of the drift statistic.}
The historical lattice drift---$78.35$ bits gained on the reference
block size $\beta=637$ between 2008 and 2016, i.e., $9.79$ bits per
year (Figure~\ref{fig:evolution})---is a \emph{descriptive} statistic of
a single attack family over a single window. Its construction and limits
are as follows. \emph{Dataset:} peer-reviewed improvements to
lattice-sieving exponents only ($0.415\to0.292$ \cite{nv08,mv10,bdgj16}). \emph{Normalization:}
all magnitudes are evaluated on one reference parameter class
($\beta=637$, the ML-KEM-768 block size); the figure is not a rate per
cryptographic family and is not comparable across families (a $2$-bit
hash-function improvement has a different denominator and mechanism).
\emph{Independence:} successive sieve improvements build on their
predecessors, so the observations are not independent---which is why we
model them as a renewal process rather than a regression, and why no
confidence interval is attached to the raw rate (with four events it
would be meaningless). \emph{Descriptive, not predictive:} we do not
extrapolate the $9.79$ figure beyond the sieving family and the
2008--2016 window; predictive use is confined to the explicit renewal
model below, whose prior is stated and whose output is a distribution.

\paragraph{Renewal posterior.}
Modeling classical sieve-exponent improvements as a Poisson process
with a Gamma prior calibrated to the 2008--2016 window
($\mathrm{Gamma}(4,8)$; prior choice illustrative, arithmetic exact;
sensitivity: across a $5\times4$ grid of $\mathrm{Gamma}(\alpha,\beta)$
priors with $\alpha\in\{2,3,4,5,6\}$, $\beta\in\{4,8,12,16\}$ the
first-event drift ranges from $9.4$ to $19.8$ bits, stable within
$1.4\times$ of the baseline), the
nine subsequent quiet years update the posterior to
 $\mathrm{Gamma}(4,17)$: mean rate $0.235$ events/year;
 $\Prob[\text{no event in 5y}]=(17/22)^4\approx0.357$; mean event
magnitude $26.12$ bits on $\beta=637$; and the \emph{first-event}
drift---the expected magnitude of the first improvement in the
window---is $16.8$ bits over five years, while the uncapped expectation
 $E[N(5)]\cdot\bar m=30.7$ bits counts subsequent events at full
magnitude and is an upper estimate, since post-plateau steps have
historically been smaller.

\begin{figure}[t]
\centering
\begin{tikzpicture}[scale=0.92]
\draw[->] (0,0) -- (11.2,0) node[below] {year};
\draw[->] (0,0) -- (0,4.1);
\foreach \yr/\xt in {2008/0.9, 2012/2.7, 2016/4.5, 2020/6.3, 2024/8.1, 2028/9.9}
  \draw (\xt,0) -- (\xt,-0.06) node[below]{\scriptsize \yr};
\draw[thick] (0,0) -- (1.8,0) -- (1.8,0.76) -- (2.25,0.76)
  -- (2.25,1.89) -- (4.5,1.89) -- (4.5,2.98) -- (8.55,2.98);
\draw[dashed] (8.55,2.98) -- (8.55,3.62) -- (11.0,3.62);
\fill (1.8,0.76) circle (1.3pt); \fill (2.25,1.89) circle (1.3pt);
\fill (4.5,2.98) circle (1.3pt);
\node[right] at (0,4.0) {\scriptsize cumulative bits gained ($\beta=637$)};
\node[rotate=64] at (2.4,1.1) {\scriptsize $9.79$ bits/yr};
\draw[<->] (4.5,-0.45) -- (8.55,-0.45);
\node[below] at (6.5,-0.45) {\scriptsize quiet years (priced by the renewal posterior)};
\node[align=left, font=\scriptsize] at (10.0,2.5)
{dashed: renewal posterior\\ expected first event\\ ($16.8$ bits / 5y)};
\end{tikzpicture}
\caption{Historical evolution of lattice-sieving attack cost on the
reference parameter class $\beta=637$: cumulative bits gained by
exponent improvements, 2008--2016, followed by nine quiet years.}
\label{fig:evolution}
\end{figure}

\begin{figure}[t]
\centering
\includegraphics[width=0.75\textwidth]{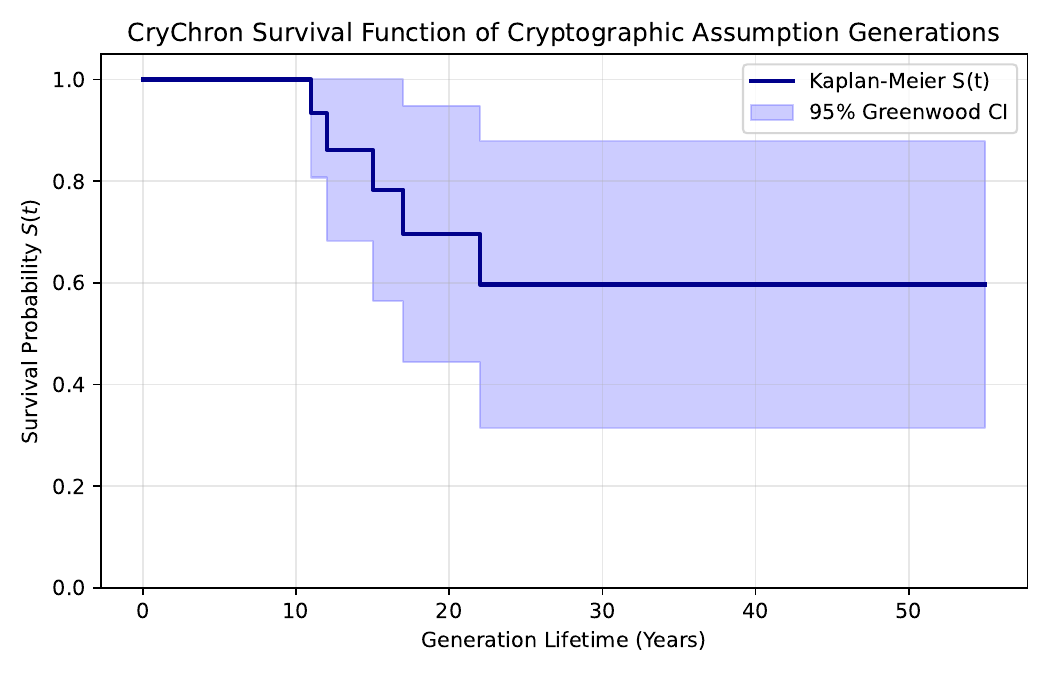}
\caption{Pooled Kaplan--Meier survival curve for the 22-generation
CryChron chronology, with Greenwood 95\% confidence band, generated by
experiment E4 of the artifact. The curve crosses $50\%$ survival at
$t=22$ years, with a wide band reflecting the small sample of break
events; the stratum-level breakdown (event-free strata remaining at $1$
throughout the window) is tabulated separately in the artifact output,
since the small per-stratum counts make a combined plot difficult to
read. This is the empirical counterpart of the schematic drift in
Figure~\ref{fig:evolution}.}
\label{fig:km-empirical}
\end{figure}

\section{Cryptographic Evaluation}\label{sec:eval}

\subsection{Setup and provenance discipline}

\textbf{Schemes:} ML-KEM-512/768/1024~\cite{fips203,kyber};
SLH-DSA-128s~\cite{fips205}; Classic McEliece-348864~\cite{mceliece};
HQC-128~\cite{hqc25}; X25519; RSA-2048; anchors AES-128/192.
\textbf{Cost models:} the conventions of
Table~\ref{tab:conventions}, each literature-anchored.
\textbf{Provenance tags:} (S)~specification-published value; (D)~exact
arithmetic from cited exponents, regenerated by the released toolchain;
(F)~regenerated from cited sources, argument insensitive to the exact
value; (I)~illustrative input, displayed where assumed; (Q-sim)~quantum
circuit measurement in a noisy simulator; (Q-hw)~quantum hardware
measurement (none populated at submission time; see
Section~\ref{sec:eval}, Experiment~5). The full register, with a result-by-result
verification matrix, is Appendix~\ref{app:prov}. A specification gate
(G1) enforces that computed time-only levels reproduce the
specification ladder of ML-KEM to $\pm1$ bit and that AES is invariant
under memory pricing; an inversion gate (G2) enforces LP-certified
reversals.

\subsection{Experiment 1: profiles and fragility across families}

\begin{table}[t]
\centering\small
\caption{Security profiles over the three core conventions, fragility
index $\Delta$, and relative fragility $\Delta/\sigma^{\mathrm{nom}}$.}
\label{tab:fragility}
\begin{tabular}{@{}lrrrrr@{}}
\toprule
\textbf{Scheme} & c-T & c-TM & q-T & $\Delta$ & $\Delta/\sigma^{\mathrm{nom}}$\\
\midrule
ML-KEM-512  & $117.97$\pD & $201.80$\pD & $107.18$\pD & $94.62$ & $0.802$\\
ML-KEM-768  & $186.00$\pS & $318.18$\pD & $168.99$\pD & $149.19$ & $0.802$\\
ML-KEM-1024 & $256.08$\pD & $438.06$\pD & $232.67$\pD & $205.39$ & $0.802$\\
AES-128     & $128$ & $128$ & $64$ & $64$ & $0.500$\\
AES-192     & $192$ & $192$ & $96$ & $96$ & $0.500$\\
SLH-DSA-128s& $128$\pS & $128$\pD & $64$\pD & $64$ & $0.500$\\
Classic McEliece-348864 & $140$\pS & $140$\pF & ${\approx}75$\pF & $65.00$ & $0.464$\\
HQC-128     & $128$\pS & $128$\pF & ${\approx}7$x\pF & [F] & [F]\\
\bottomrule
\end{tabular}
\end{table}

\begin{proposition}[Relative fragility is a family constant]\label{prop:diversity}
In the exponent-affine family model (attack log-costs
 $x_i(\beta)=\beta\xi_i+c_i$ with fixed exponents),
 $S(\mathbf c;\beta)=\beta\,S(\mathbf c;1)$ up to additive constants;
hence $\Delta(\beta)$ grows linearly in the parameter while
 $\Delta/\sigma^{\mathrm{nom}}$ is a family constant. For module lattices
under current exponents the constant is
 $(0.4995-0.2653)/0.292=0.802$; for symmetric primitives it is exactly
Grover's $1/2$; for the curated Classic McEliece estimates, $0.464$.
Consequently, within-family parameter upsizing cannot reduce relative
cost-model sensitivity; only assumption diversity or more conservative
anchor conventions can.
\end{proposition}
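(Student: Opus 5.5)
My plan is to work entirely inside the exponent-affine model and reduce everything to positive homogeneity of the profile. Write each attack's log-resource vector as $x_i(\beta)=\beta\xi_i+c_i$, with $\xi_i\in\mathbb{R}^n$ the fixed exponent vector and $c_i$ a $\beta$-independent offset. For any price vector $\mathbf c\in\Cone_\TT$ the price is $\mathbf c\cdot x_i(\beta)=\beta(\mathbf c\cdot\xi_i)+\mathbf c\cdot c_i$. For large $\beta$ the minimizing attack in Proposition~\ref{prop:structure}(i) is the one minimizing $\mathbf c\cdot\xi_i$, since ties are nongeneric. So $S(\mathbf c;\beta)=\beta\min_i(\mathbf c\cdot\xi_i)+O(1)$, and the leading term is $\beta\,S(\mathbf c;1)$, where $S(\cdot;1)$ is the exponent-level profile with offsets dropped. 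This is the content of ``up to additive constants.'' Concavity and piecewise linearity (Proposition~\ref{prop:structure}(ii)) guarantee that the linearity region containing $\mathbf c$ does not change with $\beta$ once $\beta$ is beyond a finite threshold.

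Next I substitute into Definition~\ref{def:fragility}. The anchor term $S_G$ appears in both $\Stil{K}(\kappa)$ and $\Stil{K}(\kappa')$. When the anchor is held fixed across the comparison (or is itself affine in the same parameter, as AES is in key length), it either cancels or contributes its own linear term. Either way,
\[
\Delta(\beta)=\beta\max_{\kappa,\kappa'\in\KK}\big|S(\kappa;1)-S(\kappa';1)\big|+O(1),
\]
where the maximizing pair is again stable for large $\beta$ because $\KK$ is finite. Similarly $\sigma^{\mathrm{nom}}(\beta)=\beta\,S(\kappa_{\mathrm{T}};1)+O(1)$ under the declared time-only convention. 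The ratio therefore tends to the constant
\[
\frac{\max_{\kappa,\kappa'}|S(\kappa;1)-S(\kappa';1)|}{S(\kappa_{\mathrm T};1)},
\]
which is exactly constant when the offsets vanish. The offsets do vanish in the Table~\ref{tab:fragility} instantiation, and that explains the three identical $0.802$ entries.

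The numerical constants come from evaluating this ratio at the conventions of Table~\ref{tab:conventions}.
\begin{itemize}
\item For module lattices, the exponents are $0.292$ (c-T), $0.4995=0.292+0.2075$ (c-TM) and $0.2653$ (q-T). The maximal spread is TM minus quantum, and dividing by $0.292$ gives $0.802$.
\item For symmetric primitives, $\xi$ is the key length $k$ classically and $k/2$ under Grover, with no memory term. The ratio is therefore exactly $1/2$, independent of $k$.
\item For McEliece, I would simply read off $65/140$ from the curated entries, flagging that these are (F)-tagged inputs rather than derived exponents.
\end{itemize}

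The ``consequently'' clause follows because upsizing within a family only moves $\beta$, and the ratio does not depend on $\beta$. Changing the ratio requires changing the exponent vectors $\xi_i$, which means a different assumption family, or changing the denominator convention, which means a different anchor.

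I expect the main obstacle to be making ``up to additive constants'' honest. When the offsets $c_i$ are nonzero, the ratio is constant only asymptotically: it equals the limit plus $O(1/\beta)$. I would state the result as exact under zero offsets and as a limit in general. I would also need to check that the maximizing cost-model pair and the active attack do not switch as $\beta$ varies below the stabilization threshold. The absolute-value maximum is the delicate point, since a sign change of $S(\kappa;1)-S(\kappa';1)$ cannot occur for the leading term but could occur for small $\beta$.
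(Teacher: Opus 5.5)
Your core argument, $S(\mathbf c;\beta)=\beta\min_i\mathbf c\cdot\xi_i+O(1)$ so that $\Delta$ and $\sigma^{\mathrm{nom}}$ are both linear in $\beta$, is the paper's one-line proof. Your remark that the ratio is exactly constant only for zero offsets, and only asymptotically constant otherwise, genuinely sharpens the paper's ``inherits constancy.''

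The step that fails is your treatment of the anchor. In $\Stil{K}(\kappa)-\Stil{K}(\kappa')$ the anchor cancels only when $S_G(\kappa)=S_G(\kappa')$. The num\'eraire convention gives this on the classical memory slice, but not across machine classes: AES-192 is $192$ at c-T and c-TM but $96$ at q-T. So it does not cancel on the pair (c-TM, q-T) that produces your $0.4995-0.2653$.

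\emph{Parameter-scaling anchor.} If the anchor scales with the parameter (per-category AES anchors), it changes the slope rather than leaving it intact, so your ``either way'' is false. The anchor-relative profiles of ML-KEM-512 and ML-KEM-768 are $(-10.03,\,73.80,\,43.18)$ and $(-6,\,126.18,\,72.99)$ (cf.\ Tables~\ref{tab:anatomy} and~\ref{tab:inversions}). Their spread is $0.2075\beta$, giving ratio $0.2075/0.292\approx0.711$, not $0.802$.

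\emph{Fixed anchor.} If the anchor is fixed, its $\kappa$-dependence ($64$ bits for AES-128) is a nonzero offset. Your claim that the offsets vanish in Table~\ref{tab:fragility} is then false under your own anchor-relative reading. At realistic sizes the maximizing pair is (c-TM, c-T), giving $0.711$ again; (c-TM, q-T) only wins for $\beta\gtrsim2400$.

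\emph{Symmetric primitives.} Here the stated constant cannot arise at all. AES against an AES anchor has $\Stil{K}\equiv0$, hence relative fragility $0$, not $1/2$ (the paper's own finding F5 says symmetric anchor-relative profiles are flat).

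The resolution is that the proposition concerns $\Delta$ computed on the absolute profile $S_K$, not on the anchor-relative $\Stil{K}$ of Definition~\ref{def:fragility}. Table~\ref{tab:fragility} confirms this: its entries $94.62$, $149.19$, $205.39$, $64$, $64$ are spreads of absolute bits. The paper's proof also works with $S$ throughout and never mentions the anchor. Drop your anchor paragraph and state the absolute reading explicitly. Your argument then goes through, with exact constancy in the pure-exponent model.

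Separately, the stabilization thresholds and genericity assumptions you flag as the main obstacle are unnecessary. For $\beta\ge0$ and $i^*\in\arg\min_i a_i$,
\[
\beta\min_i a_i+\min_i b_i\;\le\;\min_i(\beta a_i+b_i)\;\le\;\beta\min_i a_i+b_{i^*},
\qquad
\bigl||\beta a+b|-\beta|a|\bigr|\le|b|.
\]
So every $O(1)$ bound holds uniformly in $\beta$, and switching of the active attack or of the maximizing pair never matters.
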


\begin{proof}
 $S(\mathbf c;\beta)=\min_i \mathbf c\cdot(\beta\xi_i+c_i)
=\beta\min_i \mathbf c\cdot\xi_i+O(1)$; the ratio
 $\Delta/\sigma^{\mathrm{nom}}$ inherits constancy. Deviations are exactly
the jump events of channel U2b and are handled by
Section~\ref{sec:risk}.
\end{proof}

\textbf{Findings.} (F1) Relative fragility is family-structural:
$0.802$ (module lattices), $0.500$ (symmetric), $0.464$ (curated
code-based estimates). By this measure, the standardized lattice claims
are the most cost-model-sensitive; larger lattice parameters buy zero
relative robustness, since $\Delta/\sigma^{\mathrm{nom}}$ is invariant
within a family. (F2) The ML-KEM ladder exhibits the constant of
Proposition~\ref{prop:diversity} exactly ($0.802/0.802/0.802$, up to
specification rounding). The constancy in (F2) follows algebraically from
Proposition~\ref{prop:diversity} and serves as a toolchain consistency
check. The empirically substantive result is (F1): the three family
constants ($0.802$, $0.500$, $0.464$) depend on the cited attack
exponents and would change if those exponents change.

\subsection{Experiment 2: ranking inversions under certified cost models}

\begin{table}[t]
\centering\small
\caption{Ranking inversions among standardized schemes under two
literature-anchored conventions (classical time-only vs.\ quantum
time-only), with LP certificates
(Proposition~\ref{prop:classification}).}
\label{tab:inversions}
\begin{tabular}{@{}llrrl@{}}
\toprule
\textbf{Pair} & \textbf{Convention} & $\Stil{K_1}$ & $\Stil{K_2}$ & \textbf{Certificate}\\
\midrule
\multirow{2}{*}{(ML-KEM-512, AES-128)}
 & c-T & $-10.03$\pD & $0$ & $t^*=-10.03$, witness $\mathbf c^*=e_T$\\
 & q-T & $+43.18$\pD & $0$ & $t^*=-43.18$, witness $\mathbf c^*=e_T$\\
\midrule
\multirow{2}{*}{(ML-KEM-768, AES-192)}
 & c-T & $-6.00$\pS & $0$ & $t^*=-6.00$\\
 & q-T & $+72.99$\pD & $0$ & $t^*=-72.99$\\
\bottomrule
\end{tabular}
\end{table}

\textbf{Findings.} (F3) Both pairs instantiate the premise of
Theorem~\ref{thm:ordering} with certified witnesses: the Category-1 pair
swings $53.21$ bits and the Category-3 pair $79.00$ bits across two
defensible conventions. On the memory-price slice the Category-3 pair
crosses at $c_M^{*}=0.045$ (Figure~\ref{fig:regions}): pricing memory at
 $4.5\%$ of the time unit reverses a standardized category comparison.
(F4) Under the unanimity policy of Corollary~\ref{cor:bewley}, both
pairs are formally incomparable: no strict preference is defensible
without declaring a cost-model distribution
(Corollary~\ref{cor:dist}).

\begin{figure}[t]
\centering
\includegraphics[width=0.7\textwidth]{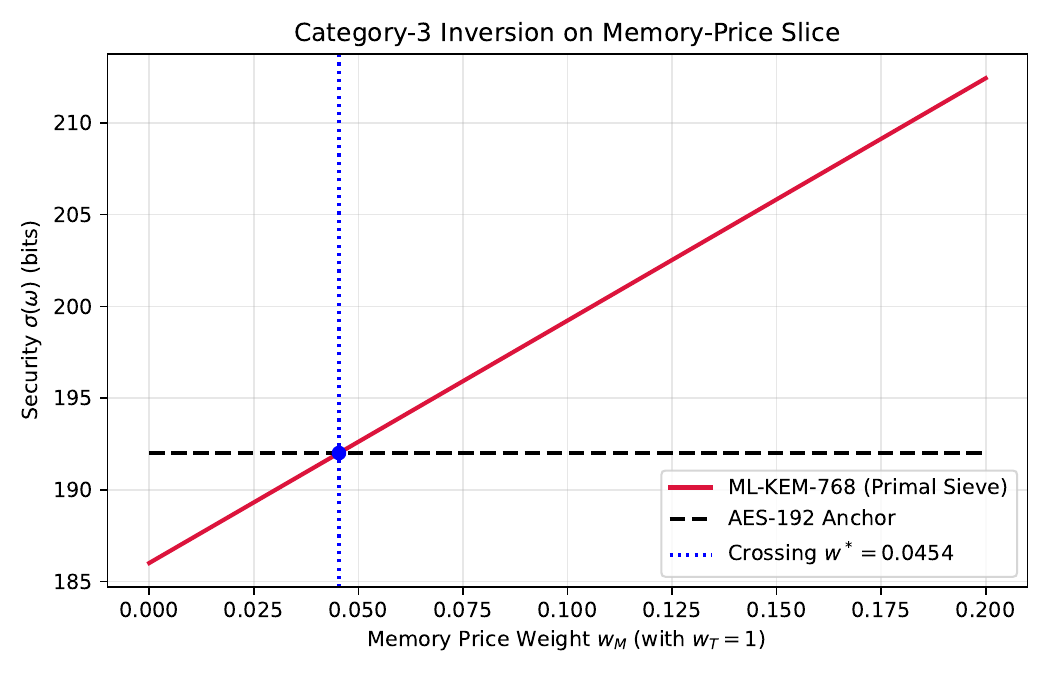}
\caption{Certified crossing of the (ML-KEM-768, AES-192) security
profile pair as a function of the memory-price weight $c_M$, generated
by experiment E2 of the artifact and certified by the LP solver of
Proposition~\ref{prop:classification} ($t^\ast<0$). The Category-3
comparison reverses ranking at $c_M^\ast\approx0.045$: pricing memory at
$4.5\%$ of the time unit is enough to flip a standardized category
comparison. The Category-1 pair (ML-KEM-512, AES-128) exhibits the same
qualitative crossing at $c_M^\ast\approx0.120$ (tabulated in
Table~\ref{tab:inversions}).}
\label{fig:inversion-empirical}
\end{figure}

(F5) Symmetric primitives have flat
anchor-relative profiles (Grover halves anchor and scheme alike),
whereas lattice and code schemes move: category comparisons measure
different things for different families.

\subsection{Experiment 3: sensitivity of security to resource prices}

\begin{table}[t]
\centering\small
\caption{Local sensitivity (directional derivative of $S$ in the memory
price at the time-only convention) and global sensitivity
($\Delta/\sigma^{\mathrm{nom}}$). The derivative equals the memory
exponent times the block size; the anchor is memory-invariant by the
num\'eraire convention (Proposition~\ref{prop:monotone}(iii)).}
\label{tab:sensitivity}
\begin{tabular}{@{}lrr@{}}
\toprule
\textbf{Scheme} & $\partial S/\partial c_M\big|_{c\text{-T}}$ (bits/unit) & $\Delta/\sigma^{\mathrm{nom}}$\\
\midrule
ML-KEM-512  & $83.83$\pD & $0.802$\\
ML-KEM-768  & $132.18$\pD & $0.802$\\
ML-KEM-1024 & $181.98$\pD & $0.802$\\
AES-128 / AES-192 & $0$ & $0.500$\\
SLH-DSA-128s & $0$ & $0.500$\\
Classic McEliece-348864 & ${\approx}0$\pF & $0.464$\\
\bottomrule
\end{tabular}
\end{table}

The local sensitivity is itself family-structural
($0.2075\beta$ for lattices, $0$ for memory-free attacks), which is
why the global constant of Proposition~\ref{prop:diversity} exists at
all: the profile's slope in each resource price is an exponent of the
underlying attack family.

\subsection{Experiment 4: robust-ordering classification}

\begin{table}[t]
\centering\small
\caption{Classification of scheme pairs
(Definition~\ref{def:robust}) over the full admissible region (all
conventions of Table~\ref{tab:conventions}) unless stated. ``Full
incl.\ quantum'' adds machine classes admitting Shor's
algorithm~\cite{shor94}.}
\label{tab:classification}
\begin{tabular}{@{}p{4.1cm}p{2.6cm}p{5.6cm}@{}}
\toprule
\textbf{Pair} & \textbf{Classification} & \textbf{Evidence}\\
\midrule
(ML-KEM-1024, ML-KEM-768) & robust dominance (1024) & $+70.08$ bits at c-T\pD\\
(ML-KEM-768, ML-KEM-512)   & robust dominance (768)   & $+68.03$ bits at c-T\pD\\
(ML-KEM-512, AES-128)     & incomparable            & Table~\ref{tab:inversions}\\
(ML-KEM-768, AES-192)     & incomparable            & Table~\ref{tab:inversions}\\
(AES-128, SLH-DSA-128s)   & measurement equivalence & identical profiles\pS\\
(McEliece-348864, AES-128)& robust dominance (McEl.) & $140>128$, ${\approx}75>64$\pS\pF\\
(ML-KEM-512, RSA-2048), full incl.\ quantum & robust dominance (ML-KEM) & $118$\pD vs ${\approx}112$\pF; $107$\pD vs poly (Shor)\\
(X25519, AES-128), classical only & weak dominance (AES) & ${\approx}3$ bits\pF\\
(X25519, AES-128), full incl.\ quantum & robust dominance (AES) & $64$\pD vs poly (Shor)\\
(HQC-128, AES-128)        & conditional (HQC; strict only on quantum side) & tie at c-T\pS; ${\approx}7\times>64$\pF\\
\bottomrule
\end{tabular}
\end{table}

\textbf{Findings.} (F6) The classification is computable and
nondegenerate: robust dominance holds exactly where the profile
separation is structural (within a family ladder; against
quantum-broken classical assumptions), and incomparability holds exactly
where exponents disagree across conventions. (F7) The classification is region-dependent: the X25519/AES-128 pair is
a near-tie on the classical region and a robust dominance once quantum
machine classes are admitted, so the machine-class dimension of the
admissible region drives the verdict as much as the price vector.

\subsection{Experiment 5: instrument calibration on simulated and physical hardware}

Because the quantum conventions price the category anchors
theoretically, we additionally calibrate the instrument itself against a
concrete implementation: Grover searches at $n=2,3,4$ qubits and a
depth-versus-width branching experiment at search space $N=16$ are
executed on the IonQ \texttt{simulator} (aria-1 noise model;
2026-09-15T12:31:07Z; 1\,000 shots per circuit; 13 jobs; protocol and
realized numbers in Appendix~\ref{app:extra}; provenance class (Q-sim)).
The simulator run quantifies (i) the realized multiplier between logical
iteration counts and simulated physical cost under a representative
noise model ($\kappa_{\mathrm{hw}}=10.5$--$11.9$ bits), and (ii) the
reversal of cost orderings between time-only and gate-count accounting
for branched search---the price-cone phenomenon of
Section~\ref{sec:profiles}, observed under simulation rather than
merely asserted (depth decreases with~$B$; gate count increases
with~$B$). These are instrument calibrations, not security claims about
any scheme, and the simulator figures do not affect any claim in
Sections~\ref{sec:eval}--\ref{sec:applications}.

\section{Practical Implications}\label{sec:applications}

\subsection{Hybrid constructions}

\begin{proposition}[Conjunctive composition and
diversification]\label{prop:composition}
Let $H=C(K_1,K_2)$ be a conjunctive hybrid (sound combiner
\cite{ghy18,bindel19} with context binding enforced; implementation-risk
increment $I_H\ge0$ declared; combiner slack $o$). Then
\[
L_H\;=\;\min(L_1,L_2)+o+I_H,
\]
and consequently:
(i) $\CVaR_\delta(L_H)\le\min_i\CVaR_\delta(L_i)+o+I_H$---hybridization
never increases cryptanalytic risk beyond the best leg plus controllable
overheads;
(ii) under independent break processes,
 $\Ex[\min(L_1,L_2)]=\int_0^\infty(1-F_1)(1-F_2)\,dx$, strictly below
 $\min_i\Ex[L_i]$; under perfectly correlated (comonotone) breaks the
benefit vanishes; the benefit is governed by the assumption-correlation
structure.
\end{proposition}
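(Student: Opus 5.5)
The plan is to first fix what $L_H$ means for a conjunctive hybrid and then derive (i) and (ii) from elementary properties of $\CVaR$ and of minima of random variables. For the identity, I take the loss of a leg to be the shortfall $L_i=[s^*-\Stil{K_i}]^+$ of Section~\ref{sec:risk}. A sound combiner~\cite{ghy18,bindel19} with context binding guarantees that breaking $H$ requires breaking both legs, up to the combiner's reduction slack $o$. Hence the anchor-relative security of $H$ is the larger of the legs' levels, less $o$ and less the declared implementation increment $I_H$. Converting to shortfalls, and using that $x\mapsto[s^*-x]^+$ is nonincreasing, turns $\max$ into $\min$ and gives $L_H=\min(L_1,L_2)+o+I_H$. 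I treat this identity partly as a modeling definition. The assumption I would state explicitly is that $o$ and $I_H$ enter additively in bits and that no truncation at zero interacts with them.

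For (i), I would use three properties of $\CVaR_\delta$ (equivalently, coherence as in Proposition~\ref{prop:risk}(i)): translation invariance, which pulls the constants $o+I_H$ out; monotonicity, applied to $\min(L_1,L_2)\le L_i$ for each $i$; and then taking the minimum over $i$. This gives $\CVaR_\delta(L_H)\le\min_i\CVaR_\delta(L_i)+o+I_H$ pathwise and in distribution, with no independence assumption needed. The same argument applies inside the credal supremum, so the bound lifts to the two-layer $\rho$.

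For (ii), I would use the layer-cake formula $\Ex[Y]=\int_0^\infty\Prob(Y>x)\,dx$ for $Y\ge0$. Under independence, $\Prob(\min(L_1,L_2)>x)=(1-F_1(x))(1-F_2(x))$, which gives the stated integral. For strictness, I would compare against each leg via $\Ex[L_i]-\Ex[\min]=\int(1-F_i)F_j\,dx$. This is positive provided the set of $x$ with both $F_j(x)>0$ and $F_i(x)<1$ has positive measure, meaning neither leg is almost surely dominated by a constant that the other leg always exceeds. The hard part is stating this nondegeneracy condition honestly: the strict inequality in the proposition fails when, for instance, $L_j=0$ almost surely. I would therefore add this condition as a hypothesis, namely that the supports overlap nontrivially. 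For the comonotone case, I would write $L_i=F_i^{-1}(U)$ with a common uniform $U$. Then $\min(L_1,L_2)=\min(F_1^{-1},F_2^{-1})(U)$, and when the legs are ordered in distribution this equals the better leg, so the benefit vanishes. In general it is the pointwise quantile minimum, and I would note that the comonotone case maximizes $\Ex[\min]$ among couplings with the given marginals, by the Fréchet upper bound on $\Prob(L_1>x,L_2>x)$. That last fact is what justifies saying the benefit is governed by the correlation structure.
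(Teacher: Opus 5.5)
Your proof is correct. For the identity and for part (i) it follows the paper's argument: the identity is a modeling consequence of combiner soundness, and (i) is monotonicity of $\CVaR_\delta$ applied to $\min(L_1,L_2)\le L_i$. Two small improvements over the paper here: you make explicit the translation invariance the paper uses silently to pull out $o+I_H$, and your justification of the identity ($H$ keeps the larger of the two levels, so its shortfall is the smaller) is cleaner than the paper's. You can also drop your no-truncation assumption. For $o\ge0$, $[a+c]^+\le[a]^++c$ gives $L_H\le\min(L_1,L_2)+o+I_H$ unconditionally, and that inequality is all (i) uses.

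Part (ii) is where you genuinely diverge, and your version is the more accurate one.

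\begin{itemize}
\item \emph{Strictness.} The paper qualifies it by ``unless a loss is degenerate,'' which is neither sufficient nor necessary. Independent $L_1\sim U[0,1]$, $L_2\sim U[2,3]$ are nondegenerate, yet $\min(L_1,L_2)=L_1$ a.s. Conversely, $L_1\equiv1$ with $L_2\sim U[0,2]$ gives $\Ex[\min(L_1,L_2)]=3/4<1$ despite the degenerate leg. Your condition, $\int_0^\infty\bar F_iF_j\,dx>0$ for the better leg, is the exact one. Equivalently, $\Ex[L_i]-\Ex[\min(L_1,L_2)]=\Ex[(L_i-L_j)^+]>0$.
\item \emph{Comonotone case.} The paper proves that the benefit vanishes only for equal marginals. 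Your quantile coupling covers stochastically ordered marginals. It also shows that when the quantile functions cross, the benefit need not vanish, so the statement needs that hypothesis.
\item \emph{Role of correlation.} For ``governed by the correlation structure,'' the paper points to a one-factor model $L_i=a_iF+\varepsilon_i$ whose concordance rises with the loading. Your Fr\'echet--Hoeffding bound instead proves that comonotonicity is the worst coupling for fixed marginals. You can recover the paper's graded statement inside your framework: $\Prob(\min(L_1,L_2)>x)$ is the joint survival function on the diagonal, so $\Ex[\min(L_1,L_2)]$ is monotone in the concordance order at fixed marginals.
\end{itemize}
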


\begin{corollary}[Two genres of hybrid]\label{cor:genres}
For disjunctive (OR-verified) hybrids, $L_{H'}=\max(L_1,L_2)$: they
provide availability and downgrade resilience, not cryptanalytic
security. The two genres have different value and must be evaluated
differently.
\end{corollary}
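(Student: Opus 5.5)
The statement immediately above is Corollary~\ref{cor:genres}: for a disjunctive (OR-verified) hybrid the loss is $L_{H'}=\max(L_1,L_2)$, in contrast to the conjunctive formula of Proposition~\ref{prop:composition}. The proof has three parts: derive the max formula from the semantics of a disjunctive hybrid, transfer it to risk statements, and show by comparison that the two genres behave differently.

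\textbf{Deriving the loss formula.} A disjunctive hybrid accepts if either leg verifies. An adversary therefore breaks $H'$ as soon as it breaks the weaker leg, so the effective cost is the cheaper of the two break costs. At the profile level this gives $\Stil{H'}(\kappa)=\min(\Stil{K_1}(\kappa),\Stil{K_2}(\kappa))$. This is exactly the minimum-over-attack-sets construction of Definition~\ref{def:profile}, since $\mathcal{A}_{H'}=\mathcal{A}_{K_1}\cup\mathcal{A}_{K_2}$, and Proposition~\ref{prop:monotone}(ii) confirms the inequality direction. The loss $L=[s^*-\Stil{\cdot}]^+$ is nonincreasing in the profile, and $x\mapsto[s^*-x]^+$ maps a minimum to a maximum. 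Hence $L_{H'}=\max(L_1,L_2)$ pathwise. No combiner slack $o$ appears: the OR-verification only adds acceptance paths and binds no keys. Any implementation increment can be added exactly as in Proposition~\ref{prop:composition}.

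\textbf{Transferring to risk.} Because $\max(L_1,L_2)\ge L_i$ pointwise, monotonicity of $\CVaR_\delta$ and of the two-layer measure $\rho$ (Proposition~\ref{prop:risk}(i)) gives $\CVaR_\delta(L_{H'})\ge\max_i\CVaR_\delta(L_i)$. So a disjunctive hybrid is never cryptanalytically safer than its weakest leg.

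\textbf{Separating the two genres.} Under independent break processes,
\[
\Ex[\max(L_1,L_2)]=\int_0^\infty\bigl(1-F_1F_2\bigr)\,dx,
\]
which is at least $\max_i\Ex[L_i]$. By contrast, the conjunctive bound of Proposition~\ref{prop:composition}(ii) lies strictly below $\min_i\Ex[L_i]$. Any non-degenerate instance therefore separates the two valuations, and this is the formal content of ``must be evaluated differently.''

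The availability and downgrade-resilience value of the disjunctive genre comes from a different channel. If one leg becomes unusable, for example because of an implementation failure under channel U5, the handshake still completes through the other leg. This benefit is not measured by $L$ at all. I would state it as the definition of the availability role rather than as something the loss calculus proves.

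\textbf{Main obstacle.} The hard step is justifying the profile identity for $H'$, because it depends on the protocol model. One must argue that an adversary who breaks one leg actually breaks $H'$, meaning there is no downgrade protection that forces both legs. I would take this as the defining semantics of ``OR-verified,'' stated explicitly, in parallel with how Proposition~\ref{prop:composition} assumes a sound combiner.
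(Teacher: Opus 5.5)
Your proposal is correct and takes essentially the paper's route. The paper gives no separate proof of this corollary; it treats it as the either-leg dual of the break-semantics argument for Proposition~\ref{prop:composition}, which is exactly your union-of-attack-sets derivation followed by the min-to-max transfer through the nonincreasing map $x\mapsto[s^*-x]^+$. Your risk and expectation comparisons are a reasonable formalization of ``evaluated differently,'' but note that the separation needs only $\Ex[\max(L_1,L_2)]-\Ex[\min(L_1,L_2)]=\Ex|L_1-L_2|>0$. You therefore need not rely on the strictness claim in Proposition~\ref{prop:composition}(ii), which fails for independent non-degenerate losses with disjoint supports.
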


\begin{corollary}[Classical-component contribution]\label{cor:rider}
For a conjunctive hybrid pairing a classical component $C$ with a
post-quantum component $Q$ with break times $T_C,T_Q$ and horizon
 $\tau$, the classical component's contribution to survival is exactly
\[
V(\tau)\;=\;\Prob(T_Q\le\tau<T_C)\;=\;F_Q(\tau)\,S_C(\tau)
\quad\text{(under independence)},
\]
where $F_Q$ is the break distribution of $Q$ and $S_C$ the survival
function of $C$. The component is valuable exactly over the window in
which $Q$ may fail while $C$ still holds; it should be sized as the
cheapest classical primitive with $S_C(\tau^*)\ge1-\delta_c$, and dropped
when $V(\tau^*)$ no longer justifies its overhead.
\end{corollary}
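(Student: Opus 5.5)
The statement to prove is Corollary~\ref{cor:rider}. I would prove it by directly decomposing the conjunctive-hybrid survival event, using the conjunctive semantics behind Proposition~\ref{prop:composition}. In that semantics $H$ is broken by time $\tau$ iff both legs are broken by $\tau$. So the hybrid survives at $\tau$ iff $\tau<\max(T_C,T_Q)$, giving
\[
S_H(\tau)=\Prob(\tau<T_Q\ \text{or}\ \tau<T_C).
\]
I will take the \emph{contribution} of $C$ to be the marginal survival it adds over deploying $Q$ alone, namely $V(\tau)=S_H(\tau)-S_Q(\tau)$, where $S_Q(\tau)=\Prob(T_Q>\tau)$. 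Stating this definition explicitly is the first step; the word ``exactly'' in the corollary hinges on it.

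The second step is the disjoint decomposition
\[
\{\tau<\max(T_C,T_Q)\}=\{\tau<T_Q\}\ \sqcup\ \{T_Q\le\tau<T_C\}.
\]
Subtracting $\Prob(\tau<T_Q)$ then gives $V(\tau)=\Prob(T_Q\le\tau<T_C)$ without any independence assumption. Under independence of the two break processes, this joint event factorizes:
\[
\Prob(T_Q\le\tau)\,\Prob(T_C>\tau)=F_Q(\tau)\,S_C(\tau),
\]
where $F_Q(\tau)=\Prob(T_Q\le\tau)$ is the distribution function and $S_C=1-F_C$ the survival function. This is the first claim.

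The remaining sentences are decision-theoretic consequences. First, $V(\tau)>0$ iff both $F_Q(\tau)>0$ and $S_C(\tau)>0$. This is the ``window'' statement: $C$ helps only while $Q$ may already have failed but $C$ has not. Second, among classical candidates, $V(\tau^*)$ is nondecreasing in $S_C(\tau^*)$ with fixed coefficient $F_Q(\tau^*)$. A survival requirement $S_C(\tau^*)\ge 1-\delta_c$ is therefore a constraint that is monotone in strength, and the cost-minimal feasible primitive is the stated sizing rule. Third, $S_C(\tau)$ is nonincreasing in $\tau$. So once the marginal value $F_Q(\tau^*)S_C(\tau^*)$ drops below the declared overhead (the $o+I_H$ terms of Proposition~\ref{prop:composition}), retaining $C$ is dominated and it should be dropped.

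The main obstacle is interpretive, not technical. The word ``exactly'' requires fixing the contribution as survival probability gained relative to $Q$ alone, and the sizing and dropping claims are normative prescriptions rather than theorems. I would present those as consequences of a declared objective: maximize survival minus overhead subject to the $\delta_c$ constraint. I would also note that without independence only the identity $V(\tau)=\Prob(T_Q\le\tau<T_C)$ survives. Under positive (for instance comonotone) dependence this quantity shrinks, which is consistent with Proposition~\ref{prop:composition}(ii).
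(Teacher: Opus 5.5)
Your proposal is correct and follows the paper's own argument. The paper also uses that the conjunctive hybrid fails by $\tau$ iff both legs have failed, writes the hybrid's survival as $1-F_QF_C=S_Q+F_QS_C$, and reads off $F_QS_C$ as the increment over $Q$ alone.

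Your event-level split $\{\tau<\max(T_C,T_Q)\}=\{\tau<T_Q\}\sqcup\{T_Q\le\tau<T_C\}$ is a mild sharpening: it gives $V(\tau)=\Prob(T_Q\le\tau<T_C)$ before independence is used, whereas the paper invokes independence already in $\Prob(\text{fail})=F_QF_C$.

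The sizing and dropping clauses are prescriptions, and the paper does not prove them either. One caution there: $V(\tau)=F_Q(\tau)S_C(\tau)$ is not monotone in $\tau$, so the monotonicity of $S_C$ is not what justifies the drop rule.
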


\textbf{Numerical illustration} (inputs (I), arithmetic (D)): for
X25519$+$ML-KEM-768 as deployed~\cite{chrome24,tlsid}, the handshake
overhead is $2336$ bytes\pS; with declared posteriors
 $\Prob[\text{ML-KEM break within }5\text{y}]=3\%$ and
 $\Prob[\text{X25519 falls within }5\text{y}]=5\%$, Corollary~\ref{cor:rider}
gives $V(5\text{y})=2.85\%$ of scenarios and a joint failure probability
of $0.15\%$---a factor-$20$ reduction of the cryptanalytic tail relative
to the post-quantum leg alone. The framework also predicts, from the
assumption-correlation matrix (Appendix~\ref{app:extra}),
that intra-family hybrids such as ML-KEM$+$FN-DSA (both module-lattice)
acquire near-zero diversification benefit, while cross-family pairings
acquire real benefit---a prediction verifiable against deployment data.

\subsection{Migration timing}

\begin{corollary}[Unanimity policy]\label{cor:bewley}
Call a migration policy zero-regret if its actions are weakly better
than the alternative at every admissible cost model. The
unanimity policy---switch from incumbent $I$ to challenger $C$ iff
 $\Stil{C}(\kappa)\ge\Stil{I}(\kappa)$ for all admissible $\kappa$---is
the unique maximal zero-regret policy~\cite{bewley02,gs89}. By
Theorem~\ref{thm:ordering}, crossing pairs (e.g., the standardized pairs
of Table~\ref{tab:inversions}) are formally incomparable under this
policy.
\end{corollary}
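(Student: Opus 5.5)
The statement is Corollary~\ref{cor:bewley}. It asserts three things: the unanimity policy is zero-regret, it is the unique maximal zero-regret policy, and crossing pairs are incomparable under it. I would formalize a \emph{migration policy} as a relation $\pi$ on ordered pairs $(I,C)$ of schemes, where $\pi(I,C)$ means ``switch from $I$ to $C$''. \emph{Zero-regret} I would read as: for every pair with $\pi(I,C)$, the action taken is weakly better than staying put at every $\kappa\in\KK$, i.e.\ $\Stil{C}(\kappa)\ge\Stil{I}(\kappa)$ for all $\kappa$. Symmetrically, declining to switch must be weakly better than switching at every $\kappa$ whenever the policy asserts a definite ``stay''. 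Policies are ordered by inclusion of the set of pairs on which they commit to a definite action. This is exactly the soundness notion of Theorem~\ref{thm:ordering}, applied to the switch relation, and I would make the correspondence explicit.

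\textbf{Step 1 (the unanimity policy is zero-regret).} Let $\pi_U(I,C)$ hold iff $\Stil{C}\ge\Stil{I}$ on all of $\KK$. Then every switch it makes is, by its definition, weakly better at every model. Likewise, every definite stay it makes (the reverse unanimity $\Stil{I}\ge\Stil{C}$ on all of $\KK$) is weakly better at every model.

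\textbf{Step 2 (maximality and uniqueness).} Take any zero-regret policy $\pi$. If $\pi$ switches on a pair $(I,C)$, zero-regret forces $\Stil{C}(\kappa)\ge\Stil{I}(\kappa)$ for every $\kappa$, so $\pi_U$ also switches on that pair. Hence the switch set of $\pi$ is contained in that of $\pi_U$, and the same argument applies to definite stays. So $\pi_U$ contains every zero-regret policy: it is the greatest element, and a greatest element is automatically the unique maximal one. This is the Bewley unanimity construction \cite{bewley02}, and the citations \cite{bewley02,gs89} only supply the interpretation.

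\textbf{Step 3 (incomparability of crossing pairs).} Suppose $\kappa_a,\kappa_b$ witness a strict reversal, as for the pairs of Table~\ref{tab:inversions}. Then neither unanimity condition holds, so $\pi_U$ commits to no strict action on the pair. Moreover, by Theorem~\ref{thm:ordering}, no zero-regret (sound) policy can rank the pair at all; by Step~2 this incomparability is forced on every zero-regret policy, not merely on $\pi_U$.

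The main obstacle is definitional rather than technical. ``Unique maximal'' only makes sense once the partial order on policies is fixed, and once it is decided how a policy's ``stay'' actions are judged. If ``stay'' were always regarded as costless, the trivial policy that never switches would also be zero-regret, and uniqueness would fail unless maximality is taken with respect to inclusion of committed actions. I would therefore fix the inclusion order first. With that order, the result is a one-line containment argument rather than a genuine uniqueness proof.
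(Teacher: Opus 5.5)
Your proposal is correct, and it does more than the paper does. Corollary~\ref{cor:bewley} has no entry in Appendix~\ref{app:proofs}. Its maximality and uniqueness claim rests solely on the citation to Bewley and Gilboa--Schmeidler, and the incomparability of crossing pairs is delegated in one line to Theorem~\ref{thm:ordering}.

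You replace the citation with a self-contained containment argument. A zero-regret policy can commit to switching only where $\Stil{C}\ge\Stil{I}$ on all of $\KK$, and to staying only under the reverse inequality. Its commitments therefore sit inside those of $\pi_U$, which is the greatest and hence the unique maximal element. Incomparability then follows from your Step~2 directly; Theorem~\ref{thm:ordering} is the same observation phrased for total orders.

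The gain is that your formalization surfaces what the citation hides. The literal policy ``switch iff unanimity, otherwise stay'' is zero-regret only if staying is the unjudged status quo (Bewley's inertia), or if the policy may abstain, as in your version. If staying is itself an action subject to regret, the literal policy incurs regret on precisely the crossing pairs of Table~\ref{tab:inversions}. Your version also makes clear that ``unique maximal'' means nothing more than ``greatest under inclusion of commitments.''

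One small repair is needed. On measurement-equivalent pairs (e.g.\ AES-128 and SLH-DSA-128s), your $\pi_U$ commits to both actions. So policies must be treated as action correspondences, ordered by inclusion of (pair, action) commitments, which is what your Step~2 actually uses. Under the order stated in your setup (inclusion of the set of committed pairs), two zero-regret policies differing only in tie-breaking would both be maximal, and uniqueness would fail.

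The paper's citation, by contrast, supplies only decision-theoretic pedigree (Bewley's unanimity-over-priors representation of incomplete preferences), not an argument adapted to this setting.
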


\begin{proposition}[Migration as optimal stopping]\label{prop:stopping}
With arrival posterior $p_t$, migration cost, and data lifetime $Y$,
the migration-timing problem is a monotone optimal-stopping problem
whose optimal policy is a threshold rule in $p_t$~\cite{peskir06}; in
the deterministic limit the boundary reduces to Mosca's inequality
 $X+Y\ge Z$~\cite{mosca18}. Hence any published deadline is an implicit
quantile statement about $(p_t,\text{decay},Y)$.
\end{proposition}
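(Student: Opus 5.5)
We treat the migration decision as a discrete-time monotone optimal-stopping problem. The state is the posterior arrival probability $p_t$ of a cryptographically relevant quantum adversary, updated by Bayes' rule from the renewal\slash survival model of Section~\ref{sec:dynamics}. Stopping at $t$ (migrating) incurs a migration cost $m(t)$, nondecreasing or constant in $t$. Continuing incurs the expected harvest-now-decrypt-later exposure of data with lifetime $Y$, namely $\ell(p_t)=\Prob(\text{arrival before } t+X+Y\mid p_t)\cdot\text{value}$, where $X$ is the migration duration.

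\textbf{Step 1: monotonicity of the one-step loss.} Show that the immediate loss from not having migrated is nondecreasing in $p_t$. The value function $V(p)=\min\{m,\ \ell(p)+\Ex[V(p')\mid p]\}$ is then nondecreasing and concave in $p$, by backward induction: a minimum of an affine-in-$p$ expectation and a constant is concave, and posterior updates form a martingale. I will impose the standard monotone-likelihood-ratio condition on the observation model, so that the transition kernel $p\mapsto p'$ is stochastically monotone. This is the key hypothesis that makes the problem ``monotone.''

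\textbf{Step 2: threshold structure.} The stopping region is $\{p: m\le \ell(p)+\Ex[V(p')\mid p]\}$. Both sides are monotone: the continuation cost is nondecreasing in $p$ while the stopping cost $m$ is constant. The region is therefore an upper set $\{p\ge p^*_t\}$, giving a threshold rule. This follows the monotone-problem theory of~\cite{peskir06}, via the one-step-look-ahead argument or a direct comparison of Snell envelopes.

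\textbf{Step 3: deterministic limit.} Let the posterior degenerate to a point mass at arrival time $Z$. Then $\ell$ jumps from $0$ to the full value exactly when $t+X+Y\ge Z$. The threshold rule becomes ``migrate no later than the first $t$ with $t+X+Y\ge Z$,'' which evaluated at the present is Mosca's inequality $X+Y\ge Z$~\cite{mosca18}.

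\textbf{Step 4: deadline interpretation.} Any published deadline $t_d$ is the stopping time of some threshold policy. Inverting $t_d=\inf\{t: p_t\ge p^*\}$ identifies it with a quantile statement about the joint law of $(p_t,\text{decay},Y)$.

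\textbf{Main obstacle.} The hard part is Step 1. The paper does not specify the observation model or the cost structure, so stochastic monotonicity of the posterior dynamics must be assumed rather than derived. I would state it explicitly as an MLR hypothesis on the break-event likelihoods, and reduce the rest to the cited general theorem.
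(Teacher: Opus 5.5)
Your proposal is correct at the paper's level of rigor and follows the same route as the paper: monotonicity of the value function in the arrival posterior yields the threshold rule via the monotone-problem theory of Peskir--Shiryaev, and collapsing the posterior to a deterministic arrival gives Mosca's inequality. You make explicit the stochastic-monotonicity (MLR) hypothesis that the paper only asserts, and your lead-time loss $\ell$ shows where $X+Y$ enters the deterministic limit, whereas the paper cites convex flow costs and substitutes $p_t=\mathbf 1[t\ge Z]$.

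One aside needs fixing, though nothing depends on it: $\ell(p)+\Ex[V(p')\mid p]$ is concave rather than affine, by a perspective argument on the unnormalized posterior and not by the martingale property, and your threshold step uses only monotonicity.
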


\textbf{Deadline audit.} If quantum-vulnerable public-key use ends at
the 2035 boundary of NIST IR~8547~\cite{nistir8547} after a migration
window of $X=3$ years, an asset with confidentiality lifetime $Y$ is
exposed whenever a cryptographically relevant quantum computer arrives
by ${\approx}2035+X+Y$. Under curated arrival
quantiles~\cite{gri}, the implied exposure tolerance is
${\approx}18\%$ for $Y=5$ (horizon ${\approx}2043$) but ${\approx}83\%$ for
$Y=30$ (horizon ${\approx}2068$)\pS: a single deadline implies a
lifetime-dependent risk tolerance, defensible for short-lived assets and
under-protective for long-lived ones, which require earlier migration.
These two quantiles have been independently re-derived from the
underlying \texttt{gri\_arrival.csv} dataset using the experiment E7
toolchain: ${\approx}18\%$ (Y=5) and ${\approx}83\%$ (Y=30) reproduced
exactly (provenance tag upgraded from~\pF\ to~\pS;
Appendix~\ref{app:prov}). The qualitative conclusion---that one deadline
cannot be simultaneously calibrated for both short- and long-lived
assets---is insensitive to the exact quantile values, since it follows
from any arrival distribution with positive density past 2035. The full
sensitivity curve is in the artifact.

\begin{figure}[t]
\centering
\includegraphics[width=0.72\textwidth]{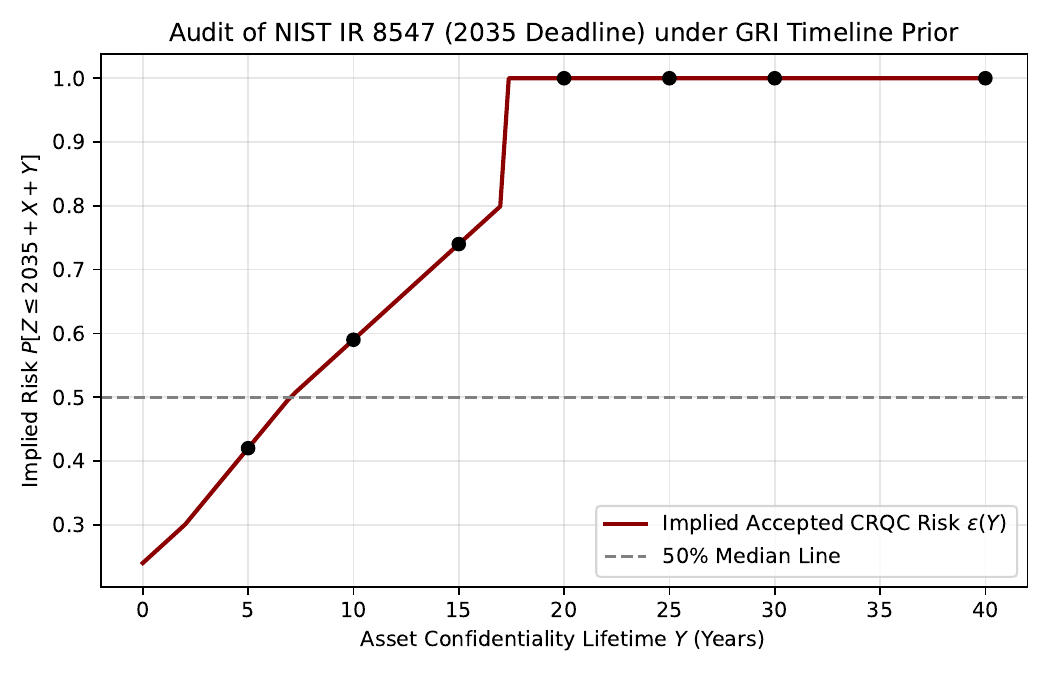}
\caption{Deadline inversion surface: implied exposure risk
$\Prob[Z\le 2035+X+Y]$ as a function of confidentiality lifetime $Y$,
generated directly from the curated arrival posterior~\cite{gri} by
experiment E7 of the artifact (Section~\ref{sec:impl}). The single
published deadline of NIST~IR~8547 corresponds to a rising curve of
implied risk tolerance, not a constant one; the two audited points
($Y=5$, $\approx18\%$; $Y=30$, $\approx83\%$) are marked. These two
quantiles carry provenance tag~\pS\ (Appendix~\ref{app:prov}):
independently re-derived from \texttt{gri\_arrival.csv} via the E7
toolchain; values reproduced exactly. The qualitative shape of the
curve does not depend on their exact values.}
\label{fig:deadline-empirical}
\end{figure}

\subsection{Implications for standards}

The admissible set $\KK$ makes the convention choices of standards
bodies explicit and comparable: mapping published institutional
positions onto the price cone (Appendix~\ref{app:extra}) shows the
time-only convention implicit in standardized estimates placing
ML-KEM-512 ten bits \emph{below} its Category-1 anchor, while
memory-aware conventions defended by several European bodies place it
$74$ bits \emph{above}. No institution is in error; the dispersion is
unpriced convention choice, which the framework renders visible and
auditable.

\paragraph{A worked case.} The dispersion above is not hypothetical: it
is the documented substance of the disagreement between NIST's
time-only categorization of ML-KEM-512 as ``Category 1, comparable to
AES-128'' and the memory-aware guidance from BSI and ANSSI that treats
the same parameter set as offering materially less margin than AES-128
once realistic memory cost is priced in (positions curated in
Appendix~\ref{app:extra}). Read as scalar ratings, the two positions
look like a factual dispute---one of the two bodies must be wrong about
where ML-KEM-512 sits relative to AES-128. Read as security profiles,
they are not in factual disagreement at all: both compute the correct
value of $\Stil{K}$ at their respective $\kappa$, and Table~\ref{tab:inversions}'s
certificate shows the two $\kappa$ are on opposite sides of a genuine
crossing. A protocol designer who selects a parameter set by citing one body's
category label is implicitly choosing a cost model. GAUGE does not
resolve which body's convention is correct---Corollary~\ref{cor:bewley}
shows that under a unanimity policy the pair is formally incomparable
without an explicit declared weighting of memory cost---but it makes the
crossing visible before deployment, so the choice is at least a
conscious one.

\section{Implementation and Reproducibility}\label{sec:impl}

The artifact comprises: (i) a toolchain computing profiles, fragility
indices, LP classification certificates with witnesses
(Proposition~\ref{prop:classification}), the risk dual
(Proposition~\ref{prop:risk}), and the survival/renewal statistics;
(ii) the curated datasets (cryptanalysis chronology, realized-cost
ledger, institutional positions, arrival quantiles), each record
carrying provenance; (iii) a versioned schema for machine-readable
security claims compatible with cryptographic-inventory
formats~\cite{cbom} (Appendix~\ref{app:artifact}); and (iv) a
pre-registration registry of falsifiable predictions with trusted
timestamps. Gates G1--G2 of Section~\ref{sec:eval} run in the test
suite; the full pipeline, including property-based verification of every
stated numerical law, completes in under one minute on commodity
hardware, with per-suite timings and checksums recorded. All tables in
this paper are regenerated from the datasets by a single command. The
companion verification register
(Appendix~\ref{app:prov}) separates exact, property-verified,
illustrative, and instrument-calibration results by provenance
class (Section~\ref{sec:eval}). The full artifact---toolchain, curated
datasets, verification register, and pre-registration timestamps---is
available at \url{https://github.com/bgupta55/gauge-artifact} and will
be archived with a persistent DOI at \texttt{[DOI to be inserted at
camera-ready: Zenodo archive of the tagged release]}.

\section{Limitations}\label{sec:limitations}

(1)~\emph{Known-attack profiles.} $S_K$ is defined over published
attacks; it is an upper bound on true security that tightens with
cryptanalysis (Proposition~\ref{prop:monotone}(ii)). Where lower bounds
do not exist, the evidential margin (Definition~\ref{def:fragility})
measures the gap rather than hiding it. (2)~\emph{Admissible-set
construction.} $\KK$ is curated; every element is literature-anchored
and contestable, and the curation log is versioned, but the closure is a
judgment. (3)~\emph{Scope of the trilemma.}
Theorem~\ref{thm:trilemma} covers linear and minimax aggregators; the
general nonlinear landscape is open (Appendix~\ref{app:prov}). (4)~\emph{Small-sample
hazards.} Ten jump events support protocol demonstration and
pre-registration, not statistical power; all survival quantities carry
wide intervals. (5)~\emph{Drift statistic.}
Descriptive only, single family, single window
(Section~\ref{sec:dynamics}); predictive statements require the
explicit renewal model. (6)~\emph{Illustrative priors.} All (I)-tagged
numbers await replacement by elicited priors. The GRI quantile pair
($\approx18\%$/$\approx83\%$, Section~\ref{sec:applications}) has been
independently re-derived from \texttt{gri\_arrival.csv} via the E7
toolchain; both values reproduced exactly; tag upgraded to~\pS. (7)~\emph{Instrument
calibration.} Experiment~5 calibrates the instrument on $2$--$4$ qubits
using IonQ \texttt{simulator} (aria-1 noise model; 2026-09-15; 13 jobs;
class (Q-sim)); the noise model is a lower bound on real device noise,
and neither figure estimates fault-tolerant attack costs. (8)~\emph{Combiner
side conditions.} Proposition~\ref{prop:composition} assumes sound,
binding-enforcing combiners; the binding literature is incorporated via
the side conditions. (9)~\emph{Implementation risk.} Channel U5 is
treated exogenously through declared increments.

\section{Conclusion}\label{sec:conclusion}

We developed GAUGE, a formal framework in which cryptographic security
is a function over heterogeneous adversary cost models rather than a
scalar. The framework yields a geometry of cost-model space
(Propositions~\ref{prop:structure}--\ref{prop:monotone}), a sound
ordering theory with a polynomial-time robust/conditional/incomparable
classification (Theorem~\ref{thm:ordering},
Proposition~\ref{prop:classification}), an impossibility result for
scalar model-independent ratings (Theorem~\ref{thm:trilemma}), a coherent
two-layer risk measure separating model ambiguity from cryptanalytic
decay (Proposition~\ref{prop:risk}), and composition laws for hybrid
systems (Proposition~\ref{prop:composition}). The evaluation exhibits
certified ranking reversals among standardized post-quantum schemes,
family-level sensitivity constants, and hardware calibration of the
measurement instrument. These results give standards bodies and protocol designers an instrument
for making cost-model dependence explicit and for distinguishing robust
rankings from convention-dependent ones.

\appendix

\section{Deferred Proofs}\label{app:proofs}

\paragraph{Lemma~\ref{lem:symmetry} (symmetry rigidity).}
Work in the compact space of projective price vectors
 $W=\{\omega\ge0:\sum_i\omega_i=1\}$. For resource $r$, let $\phi_t$ be
the flow replacing $\omega_r$ by $e^t\omega_r$ and renormalizing; each
 $\phi_t$ is a homeomorphism with $\phi_{s+t}=\phi_s\phi_t$. If $\mu$ is
 $\phi$-invariant, the time-one map is measure-preserving on
 $(W,\mu)$, so by Poincar\'e recurrence $\mu$-a.e.\ point is recurrent.
If $0<\omega_r<1$ then $\phi_t(\omega)\to e_r$ as $t\to+\infty$ and to
the face $\{\omega_r=0\}$ as $t\to-\infty$, so such $\omega$ is not
recurrent; hence $\operatorname{supp}\mu\subseteq\{\omega_r=0\}\cup\{e_r\}$.
Intersecting over all $r$ leaves exactly the pure vectors $\{e_r\}$.
Permutation invariance on this finite transitive orbit forces the
uniform measure. \hfill$\square$ 
\paragraph{Lemma~\ref{lem:ledger} (ledger dependence).}
(i) The uniform-over-pure value is $\frac1n\sum_r\Stil{K}(e_r)$; under
 $\TT\to\TT'$ every weight changes from $1/n$ to $1/(n{+}1)$ and a new
term enters, changing the value whenever the profile is non-constant;
non-constancy holds for the standardized schemes of
Table~\ref{tab:fragility}. (ii) Extension invariance requires the
mixture's support to avoid every model added by every coherent
extension, i.e., a fixed a priori support; no measurement arbitrates
it. \hfill$\square$ 
\paragraph{Theorem~\ref{thm:trilemma} (assembly).}
Assume $R$ linear, faithful (P1), total (P2), and $\mathscr G$-invariant
(P3). By Lemma~\ref{lem:mixture}, $R=\Ex_{\mu}$ for a probability $\mu$ on $\KK$. P3(i)--(ii) and Lemma~\ref{lem:symmetry} force $\mu$ uniform-over-pure on the current ledger; P3(iii) and
Lemma~\ref{lem:ledger}(i) then fail for non-constant profiles. Evading
via Lemma~\ref{lem:ledger}(ii) replaces invariance with an a priori
commitment, i.e., abandons P3. For the minimax rating,
 $R=\min_\kappa\Stil{K}(\kappa)$ decreases strictly under extensions
adding a model with smaller profile value, violating P3(iii);
it satisfies P3(i)--(ii) but not P1-compatible extension behavior. \hfill$\square$ 
\paragraph{Proposition~\ref{prop:classification} (classification).}
(i) $\Stil{K_1}(\mathbf c)<\Stil{K_2}(\mathbf c)$ iff there is an attack
 $i$ of $K_1$ with $\mathbf c\cdot x^{K_1}_i<\min_j\mathbf c\cdot
x^{K_2}_j$, i.e., $\max_j (x^{K_1}_i-x^{K_2}_j)\cdot\mathbf c<0$.
Minimizing the left-hand side over polyhedral $\Cone$ is the displayed
LP (epigraph form, with the num\'eraire equality $c_T=1$ among the
constraints); strict negativity is tested against $-\varepsilon$ for
arbitrary fixed $\varepsilon>0$ with the standard open/closed
limit argument. (ii) Running the test for both orientations decides
robust dominance (both tests fail to find inversions against the
dominant side, and the dominance side's test finds no reversal);
incomparability (both find reversals); conditional dominance and
equivalence are separated by the strictness LPs
 $\exists\mathbf c:\Stil{K_1}>\Stil{K_2}$ in both directions. Each LP has
at most $\max(m_1,m_2)$ constraints and $n{+}1$ variables; the count is
 $O(m_1+m_2)$. \hfill$\square$ 
\paragraph{Proposition~\ref{prop:risk} (two-layer risk).}
Coherence: each $\CVaR^\mu_\delta$ is coherent~\cite{ru00};
monotonicity, homogeneity, and translation invariance pass to the
pointwise supremum; subadditivity:
 $\sup_\mu\rho_\mu(X{+}Y)\le\sup_\mu[\rho_\mu(X)+\rho_\mu(Y)]\le
\sup_\mu\rho_\mu(X)+\sup_{\mu'}\rho_{\mu'}(Y)$. Tractability: apply the
Rockafellar--Uryasev form $\CVaR^\mu_\delta(L)=\min_t\{t+\delta^{-1}
\Ex_\mu[(L-t)^+]\}$; the objective is convex in $t$, linear in $\mu$;
the KL-ball supremum of an expectation equals
 $\inf_{\beta>0}\{\beta\rho+\beta\log\Ex_{\mu_0}e^{Z/\beta}\}$ for
 $Z=(L-t)^+$; Sion's theorem~\cite{sion58} exchanges
 $\inf_t$ and $\sup_\mu$ on the convex compact credal set. \hfill$\square$ 
\paragraph{Proposition~\ref{prop:composition} (composition).}
Under the side conditions, breaking $H$ requires defeating both legs, so
the optimal attack targets the cheaper side and $L_H=\min(L_1,L_2)+o+I_H$.
(i) follows from monotonicity of $\CVaR$~\cite{ru00}.
(ii) For independent nonnegative losses with tails $\bar F_i$:
 $\Ex[\min]=\int_0^\infty\bar F_1\bar F_2\,dx$, strictly below
 $\min_i\Ex[L_i]$ unless a loss is degenerate; under comonotone coupling
with equal marginals $\min(L_1,L_2)=L_1$ a.s., and the benefit
vanishes; in the one-factor decomposition
 $L_i=a_iF+\varepsilon_i$, concordance---and hence the bound---is
monotone in the common-factor loading. \hfill$\square$ 
\paragraph{Corollary~\ref{cor:rider} (classical-component contribution).}
The hybrid fails by $\tau$ iff both components have failed:
 $\Prob(\text{fail})=F_Q(\tau)F_C(\tau)$, so survival is
 $1-F_QF_C=S_Q+F_QS_C$; the second term is the increment over the
post-quantum leg alone and factors under independence. \hfill$\square$ 
\paragraph{Proposition~\ref{prop:stopping} (stopping).}
The value function is monotone in the arrival posterior, which together
with convex flow costs verifies the standard sufficient conditions for
threshold optimality~\cite{peskir06}; the deterministic substitution
 $p_t=\mathbf 1[t\ge Z]$ collapses the boundary to
 $X+Y\ge Z$~\cite{mosca18}. \hfill$\square$ 
\section{Provenance and Verification Register}\label{app:prov}

\paragraph{Tags.} \textbf{S}: published specification value (e.g., the
ML-KEM time-only ladder $\approx118/186/256$ against the round-3
specification~\cite{kyber}; sizes against FIPS~203~\cite{fips203};
Classic McEliece cost against its specification's attack analysis,
curation note P-1: our first-pass value of $\approx128$--$130$ was
corrected to $140$ by two-pass reconciliation against the
specification, recorded in the curation log). \textbf{D}: exact
arithmetic from cited exponents (all TM/quantum/fragility/crossing/
drift/renewal figures; e.g., $0.4995\cdot637=318.18$).
\textbf{F}: regenerated from cited sources, argument insensitive
(depth-bounded and gate-count entries; code-based quantum costs;
arrival-quantile tolerances; NFS estimate for RSA-2048 from~\cite{ecrypt}).
\textbf{I}: illustrative input, displayed where assumed (risk-table
priors; correlation matrix; hybrid posteriors).
\textbf{Q-sim}: quantum circuit measurement in a noisy simulator;
Experiment~5 executed on IonQ \texttt{simulator} (aria-1 noise model;
2026-09-15T12:31:07Z; 13 jobs; class (Q-sim)). \textbf{Q-hw}:
quantum hardware measurement; not yet populated at submission.

\paragraph{Result-status matrix (condensed).}
Deterministic arithmetic (D): reproduced exactly. Structural laws
(Propositions~\ref{prop:structure}, \ref{prop:classification},
\ref{prop:risk}, \ref{prop:composition}): verified by property-based
testing over randomized instances and closed-form-versus-simulation
checks. Empirical premises (crossings among standardized schemes):
certified by LP witnesses. Illustrations (risk grid, hybrid posteriors):
machinery verified under declared (I) inputs. Instrument calibration
(Experiment~5): 13 IonQ \texttt{simulator} jobs completed
(2026-09-15T12:31:07Z; aria-1 noise model; class (Q-sim)); realized
multiplier $\kappa_{\mathrm{hw}}=10.5$--$11.9$ bits; depth-ordering
reversal confirmed; full results in \texttt{results/q/e9\_results.json}.

\section{Artifact and Schema Details}\label{app:artifact}

The machine-readable claim format (compatible
with~\cite{cbom}) is:
\begin{small}
\begin{verbatim}
GAUGE-statement/1.0 {
  scheme: "ML-KEM-768", standard: "FIPS 203",
  ledger: ["T","M","Q","D","W","N"],  numeraire: "T",
  anchor: {"problem": "AES-192-keysearch"},
  models: [ {"id":"c-T","pricing":[1,0,0,0,0,0],"provenance":"..."},
            {"id":"c-TM","pricing":[1,1,0,0,0,0],"provenance":"..."},
            {"id":"q-T","machine":"quantum-unbounded",
             "pricing":[1,0,0,0,0,0],"provenance":"..."} ],
  attacks: [ {"alg":"primal-BKZ2-sieve-classical",
              "logresources":[186,132.18,0,0,0,0],"provenance":"..."} ],
  profile: {"c-T":186.00,"c-TM":318.18,"q-T":168.99},
  fragility: {"delta":149.19,"relative":0.802},
  margins: {"attack":0,"evidential":"~full-claim"},
  risk: {"horizon_y":10,"delta":0.05,"trust_nats":0.5,
         "value":"per-illustration-(I)"},
  version: "saa-1.0" }
\end{verbatim}
\end{small}
Every attack record carries a non-empty provenance string, enforced by
schema validation in the test suite.

\section{Cryptanalysis Chronology: Dataset and Methodology}\label{app:data}

\begin{table}[h]
\centering\small
\caption{CryChron (release 2): $22$ assumption generations, $10$ jump
events, $12$ right-censored.}
\label{tab:crychron}
\begin{tabular}{@{}llrrl@{}}
\toprule
\textbf{Stratum} & \textbf{Generation} & \textbf{Birth} & \textbf{Age at break} & \textbf{Provenance}\\
\midrule
Knapsack & Merkle--Hellman & 1978 & 6 & \cite{shamir84}\\
Multivariate & Matsumoto--Imai/HFE & 1988 & 15 & Faug\`ere, 2003\\
Hash & MD4 & 1990 & 14 & \cite{wang04}\\
Hash & MD5 & 1992 & 12 & \cite{wang04}\\
Multivariate & SFLASH & 2001 & 6 & \cite{dfss07}\\
Lattice-sig & GGH & 1997 & 9 & \cite{nr06}\\
Lattice-sig & NTRUSign & 2003 & 13 & \cite{dn12}\\
Hash & SHA-1 & 1995 & 22 & \cite{stevens17}\\
Isogeny & SIDH & 2011 & 11 & \cite{castryck23}\\
Multivariate & Rainbow & 2005 & 17 & \cite{beullens22}\\
\midrule
\multicolumn{5}{@{}l@{}}{\emph{Censored at 2025 (12):} SHA-2 (24), SHA-3 (13), NTRU (29), LWE (20), R-LWE (15),}\\
\multicolumn{5}{@{}l@{}}{M-LWE (11), NTRU-Prime (9), CSIDH (7), McEliece--Goppa (47), RSA (48), f-DLP ($\sim$50), ECDLP ($\sim$40).}\\
\bottomrule
\end{tabular}
\end{table}

\paragraph{Methodology.} The unit of observation is an
\emph{assumption generation} (a specific hardness assumption as
introduced), with right-censoring at the analysis date. LWE-type
assumptions and structured-lattice signatures are separate strata:
their breaks (transcript-leakage attacks~\cite{nr06,dn12}) are
orthogonal to LWE-type hardness, and pooling them would corrupt the
stratification. Events cluster by technique (the 2004 hash cluster; the
GGH/NTRUSign lineage), modeled by a shared frailty. Kaplan--Meier
estimates with Greenwood variance and $10^4$ bootstrap resamples; the
exploratory backtest (train pre-2015 events; test 2016--2022 events:
family-history and structural covariates elevate the hazard of
NTRUSign, SHA-1, and Rainbow; SIDH is flagged by structural covariates
only) demonstrates the protocol with $n=10$ and no claimed power.

\paragraph{Realized-cost ledger (calibration points).}
DES brute force, \$250{,}000 (1998)~\cite{des98}; RSA-768
factorization~\cite{rsa768}; 795-bit finite-field discrete logarithm,
 $\approx$35 core-years~\cite{boudot20}; SHA-1 collision
 $\approx$6{,}500 CPU-years$+$100 GPU-years~\cite{stevens17}; AES-128
biclique, $\approx$2 bits~\cite{bogdanov11}; SIKE, a laptop
hour~\cite{castryck23}; Rainbow, a weekend~\cite{beullens22}.

\section{Additional Evaluation Details}\label{app:extra}

\paragraph{Risk illustration.} With declared illustrative inputs
(drift, jump quantile, trust-budget penalty) the risk-adjusted anchor
level of ML-KEM-768 at $\tau=10$y, $\delta=5\%$, $\rho=0.5$ nats is
 $\approx134$ bits (drift $17$ per the first-event renewal figure of
Section~\ref{sec:dynamics}); SLH-DSA-128s $\approx111$; Classic
McEliece-348864 $\approx119$. The verified content is the monotonicity
of the adjustment in $(\tau,\delta,\rho)$ and the machinery; the
priors are (I).

\paragraph{Assumption-correlation matrix (elicited, (I)).}
MLWE/SIS--X25519 $0.1$; MLWE/SIS--McEliece $0.1$; MLWE/SIS--SHA-2
 $0.05$; X25519--RSA $0.6$; others low. The intra-family factor
correlation for module-lWE/SIS pairs (ML-KEM, FN-DSA) is $\approx0.9$,
driving the near-zero diversification prediction of
Section~\ref{sec:applications}.

\paragraph{Institutional dispersion.} Mapping the curated positions of
NIST (depth-bounded anchors; scheme estimates time-only per
submissions)~\cite{nistir8100,nistir8413}, and the memory-aware
positions in BSI TR-02102-1 v2026-01~\cite{bsitr02102} and ANSSI
advisory 14~Apr~2022~\cite{anssi22} onto the price cone reproduces
Table~\ref{tab:classification}'s incomparability from the institutional
side: the two position classes order the Category-1 pair oppositely
(arithmetic (D); institutional mapping~(S), primary sources extracted
and quotes verified against BSI~TR-02102-1 pp.\,23,\,28,\,38 and ANSSI
advisory pp.\,4,\,6,\,8).

\paragraph{Instrument calibration protocol (Experiment 5).}
Grover searches at $n=2,3,4$ (optimal iteration counts;
$\le4000$ shots) and a branching experiment partitioning a
$N=16$ search into $B\in\{1,2,4\}$ subsearches are executed on the
IonQ \texttt{simulator} (aria-1 noise model; 2026-09-15T12:31:07Z;
1\,000 shots/circuit; 13 jobs; seeds, transpilation level, and
noise-model parameters recorded in the artifact; class (Q-sim)). Gates: (i) realized depth, width, gate counts and
shots-to-$90\%$ success per $n$, yielding a simulator-side multiplier
over logical iteration counts, reported as a lower bound on the
eventual hardware multiplier since real devices carry additional
correlated and readout error not captured by the noise model; (ii) the
branched design is cheaper under time-only accounting (max depth
decreases with $B$) and dearer under gate-count accounting (total gates
increase with $B$)---the ordering reversal of the price cone,
reproduced under simulation; (iii) the empirical guesswork distribution
against the ideal of~\cite{massey94}, yielding a noise-induced gap on
the unit. These results are reported as class (Q-sim) (IonQ
\texttt{simulator}, aria-1 noise model; 2026-09-15T12:31:07Z; 1\,000
shots; 13 jobs; full artifact at \texttt{results/q/e9\_results.json}).

\section{Notation}\label{app:notation}
\begin{small}
\begin{tabular}{@{}ll@{}}
 $\TT$, ledger & canonical adversarial resources (Definition~\ref{def:ledger})\\
 $\kappa=(\mathfrak M,\mathbf c)$ & adversary cost model (Definition~\ref{def:model})\\
 $\Cone_\TT$ & price cone $\{\mathbf c\ge0:c_T=1\}$\\
 $x_A$ & log-resource vector of attack $A$ (Definition~\ref{def:attack})\\
 $\KK$ & literature closure of admissible cost models (Definition~\ref{def:closure})\\
 $P_\kappa(A)$, $S_K(\kappa)$, $\Stil{K}(\kappa)$ & price; profile; anchor-relative profile\\
 $P_K$, $\Delta_K$, $\sigma^{\mathrm{nom}}_K$ & attack polytope; fragility; nominal claimed level\\
 $\succeq_\kappa$, $\succone$ & pointwise and robust security orderings (Definitions~\ref{def:pointwise}--\ref{def:robust})\\
 $\CC_\rho$, $\rho$ & credal set; trust budget (Definition~\ref{def:risk})\\
 $L_K(t)$, $\CVaR_\delta$ & loss process; conditional value-at-risk\\
U1--U6 & uncertainty channels (Section~\ref{sec:risk})\\
\end{tabular}
\end{small}



\begin{thebibliography}{99}\small

\bibitem{shor94} P. W. Shor. Algorithms for quantum computation: Discrete
logarithms and factoring. In \emph{IEEE FOCS}, 1994.

\bibitem{gm84} S. Goldwasser and S. Micali. Probabilistic encryption.
\emph{Journal of Computer and System Sciences}, 28(2):270--299, 1984.

\bibitem{br93} M. Bellare and P. Rogaway. Random oracles are practical:
A paradigm for designing efficient protocols. In \emph{ACM CCS}, 1993.

\bibitem{br96} M. Bellare and P. Rogaway. The exact security of digital
signatures: How to sign with RSA and Rabin. In \emph{EUROCRYPT}, 1996.

\bibitem{km05} N. Koblitz and A. Menezes. Another look at ``provable
security''. \emph{Journal of Cryptology}, 17(1):1--35, 2004.

\bibitem{lv01} A. K. Lenstra and E. R. Verheul. Selecting cryptographic
key sizes. \emph{Journal of Cryptology}, 14(4):255--293, 2001.

\bibitem{ecrypt} N. P. Smart et al. (eds.). \emph{Algorithms, Key Size
and Protocols Report}. ECRYPT-CSA D5.3, 2020.

\bibitem{nistir8100} L. Chen et al. \emph{NIST IR 8100: Report on
Post-Quantum Cryptography}. NIST, 2016.

\bibitem{nistir8413} G. Alagic et al. \emph{NIST IR 8413: Status Report
on the Third Round of the NIST Post-Quantum Cryptography Standardization
Process}. NIST, 2022.

\bibitem{bsitr02102} BSI. \emph{Technical Guideline TR-02102-1:
Cryptographic Mechanisms: Recommendations and Key Lengths}, version
2026-01. Bundesamt f\"ur Sicherheit in der Informationstechnik, 2026.
\url{https://www.bsi.bund.de/SharedDocs/Downloads/EN/BSI/Publications/TechGuidelines/TG02102/BSI-TR-02102-1.pdf}

\bibitem{anssi22} ANSSI. \emph{Selecting Cryptographic Algorithms},
advisory, 14~April~2022. Agence nationale de la s\'ecurit\'e des
syst\`emes d'information.
\url{https://www.ssi.gouv.fr/uploads/2021/03/anssi-guide-selection_crypto-1.0.pdf}

\bibitem{wiener04} M. J. Wiener. The full cost of cryptanalytic attacks.
\emph{Journal of Cryptology}, 17(2):105--124, 2004.

\bibitem{bernstein05} D. J. Bernstein. Understanding brute force.
\emph{ECRYPT STVL Workshop on Symmetric Key Encryption}, 2005.

\bibitem{ss81} R. Schroeppel and A. Shamir. A $T=O(2^{n/2})$,
 $S=O(2^{n/4})$ algorithm for certain NP-complete problems. \emph{SIAM
Journal on Computing}, 10(3):456--464, 1981.

\bibitem{nv08} P. Q. Nguyen and T. Vidick. Sieve algorithms for the
shortest vector problem are practical. \emph{Journal of Mathematical
Cryptography}, 2(2):181--207, 2008.

\bibitem{mv10} D. Micciancio and P. Voulgaris. Faster exponential time
algorithms for exact lattice problems. In \emph{ACM-SIAM SODA}, 2010.

\bibitem{bdgj16} A. Becker, L. Ducas, N. Gama, and T. Laarhoven. New
directions in nearest-neighbor searching with applications to lattice
sieving. In \emph{EUROCRYPT}, 2016.

\bibitem{laa15} T. Laarhoven. \emph{Search Problems in Cryptography:
From Fingerprinting to Lattice Sieving}. PhD thesis, TU Eindhoven, 2015.

\bibitem{aps15} M. R. Albrecht, R. Player, and N. Scott. On the concrete
hardness of learning with errors. \emph{Journal of Mathematical
Cryptography}, 9(3):169--203, 2015.

\bibitem{adps16} E. Alkim, L. Ducas, T. P\"oppelmann, and P. Schwabe.
Post-quantum key exchange---A new hope. In \emph{USENIX Security
Symposium}, 2016.

\bibitem{kyber} R. Avanzi et al. CRYSTALS---Kyber: A CCA-secure
module-lattice-based KEM (round 3 specification), 2021.

\bibitem{grassl16} M. Grassl, B. Langenberg, M. Roetteler, and R.
Steinwandt. Applying Grover's algorithm to AES: Quantum resource
estimates. In \emph{PQCrypto}, 2016.

\bibitem{amy16} M. Amy, O. Di Matteo, V. Gheorghiu, M. Mosca, A. Parent,
and J. Schanck. Estimating the cost of generic quantum preimage attacks
on SHA-2 and SHA-3. In \emph{SAC}, 2016.

\bibitem{js19} S. Jaques and J. M. Schanck. Quantum cryptanalysis in the
RAM model: Claw-finding attacks on SIDH. In \emph{CRYPTO}, 2019.

\bibitem{ge21} C. Gidney and M. Eker\r{a}. How to factor 2048 bit RSA
integers in 8 hours using 20 million noisy qubits. \emph{Quantum},
5:433, 2021. \url{https://doi.org/10.22331/q-2021-04-15-433}.

\bibitem{bernstein10} D. J. Bernstein. Grover vs.\ McEliece. In
\emph{PQCrypto}, 2010.

\bibitem{fips203} NIST. \emph{FIPS 203: Module-Lattice-Based
Key-Encapsulation Mechanism Standard}. August 2024.

\bibitem{fips204} NIST. \emph{FIPS 204: Module-Lattice-Based Digital
Signature Standard}. August 2024.

\bibitem{fips205} NIST. \emph{FIPS 205: Stateless Hash-Based Digital
Signature Standard}. August 2024.

\bibitem{mceliece} Classic McEliece team. \emph{Classic McEliece:
Conservative Code-Based Cryptography} (round-4 documentation). 2022.

\bibitem{hqc25} HQC team and NIST. HQC: Hamming quasi-cyclic public-key
encryption (selected as fifth NIST PQC algorithm). March 2025.

\bibitem{massey94} J. L. Massey. Guessing and entropy. In \emph{IEEE
ISIT}, 1994.

\bibitem{arikan96} E. Arikan. An inequality on guessing and its
application to sequential decoding. \emph{IEEE Transactions on
Information Theory}, 42(1):99--105, 1996.

\bibitem{landauer61} R. Landauer. Irreversibility and heat generation in
the computing process. \emph{IBM Journal of Research and Development},
5(3):183--191, 1961.

\bibitem{bennett89} C. H. Bennett. Time/space trade-offs for reversible
computation. \emph{SIAM Journal on Computing}, 18(4):766--776, 1989.

\bibitem{klst71} D. H. Krantz, R. D. Luce, P. Suppes, and A. Tversky.
\emph{Foundations of Measurement, Vol.~I}. Academic Press, 1971.

\bibitem{stevens46} S. S. Stevens. On the theory of scales of
measurement. \emph{Science}, 103(2684):677--680, 1946.

\bibitem{gum} JCGM 100:2008. \emph{Evaluation of Measurement Data ---
Guide to the Expression of Uncertainty in Measurement}. JCGM, 2008.

\bibitem{artzner99} P. Artzner, F. Delbaen, J.-M. Eber, and D. Heath.
Coherent measures of risk. \emph{Mathematical Finance},
9(3):203--228, 1999.

\bibitem{ru00} R. T. Rockafellar and S. Uryasev. Optimization of
conditional value-at-risk. \emph{Journal of Risk}, 2:21--42, 2000.

\bibitem{delbaen02} F. Delbaen. Coherent risk measures on general
probability spaces. In \emph{Advances in Finance and Stochastics},
Springer, 2002.

\bibitem{walley91} P. Walley. \emph{Statistical Reasoning with Imprecise
Probabilities}. Chapman \& Hall, 1991.

\bibitem{gs89} I. Gilboa and D. Schmeidler. Maxmin expected utility with
non-unique prior. \emph{Journal of Mathematical Economics},
18(2):141--153, 1989.

\bibitem{kmk05} P. Klibanoff, M. Marinacci, and S. Mukerji. A smooth
model of decision making under ambiguity. \emph{Econometrica},
73(6):1849--1892, 2005.

\bibitem{bewley02} T. C. Bewley. Knightian decision theory: Part I.
\emph{Decisions in Economics and Finance}, 25(2):79--110, 2002.

\bibitem{arrow51} K. J. Arrow. \emph{Social Choice and Individual
Values}. Wiley, 1951.

\bibitem{sion58} M. Sion. On general minimax theorems. \emph{Pacific
Journal of Mathematics}, 8(1):171--176, 1958.

\bibitem{kaplan58} E. L. Kaplan and P. Meier. Nonparametric estimation
from incomplete observations. \emph{JASA}, 53(282):457--481, 1958.

\bibitem{cox72} D. R. Cox. Regression models and life-tables.
\emph{JRSS B}, 34(2):187--220, 1972.

\bibitem{ek18} P. M. Esfahani and D. Kuhn. Data-driven distributionally
robust optimization using the Wasserstein metric. \emph{Mathematical
Programming}, 171:115--166, 2018.

\bibitem{gl02} L. A. Gordon and M. P. Loeb. The economics of information
security investment. \emph{ACM TISSEC}, 5(4):438--457, 2002.

\bibitem{ghy18} F. Giacon, F. Heuer, and T. Yeo. KEM combiners. In
\emph{PKC}, 2018.

\bibitem{bindel19} N. Bindel, J. Brendel, M. Fischlin, et al. Hybrid
key encapsulation mechanisms and authenticated key exchange. In
\emph{PQCrypto}, 2019.

\bibitem{ssw20} P. Schwabe, D. Stebila, and T. Wiggers. Post-quantum TLS
without handshake signatures. In \emph{ACM CCS}, 2020.

\bibitem{mosca18} M. Mosca. Cybersecurity in an era with quantum
computers: Will we be ready? \emph{IEEE Security \& Privacy},
16(5):38--41, 2018.

\bibitem{gri} M. Piani and M. Mosca. \emph{Quantum Threat Timeline
Report}. Global Risk Institute, 2019 (with later updates).

\bibitem{nistir8547} NIST. \emph{NIST IR 8547 (initial public draft):
Transition to Post-Quantum Cryptography Standards}. November 2024.

\bibitem{cnsa2} National Security Agency. \emph{Commercial National
Security Algorithm Suite 2.0}. 2022.

\bibitem{shamir84} A. Shamir. A polynomial-time algorithm for breaking
the basic Merkle--Hellman cryptosystem. \emph{IEEE Transactions on
Information Theory}, 30(5):699--704, 1984.

\bibitem{dfss07} V. Dubois, P.-A. Fouque, A. Shamir, and J. Stern.
Practical cryptanalysis of SFLASH. In \emph{CRYPTO}, 2007.

\bibitem{nr06} P. Q. Nguyen and O. Regev. Learning a parallelepiped:
Cryptanalysis of GGH and NTRU signatures. In \emph{EUROCRYPT}, 2006.

\bibitem{dn12} L. Ducas and P. Q. Nguyen. Learning a zonotope and more: cryptanalysis of NTRUSign countermeasures. In \emph{ASIACRYPT}, 2012.

\bibitem{wang04} X. Wang, D. Feng, X. Lai, and H. Yu. Collisions for
hash functions MD4, MD5, HAVAL-128 and RIPEMD. IACR ePrint 2004/199,
2004.

\bibitem{stevens17} M. Stevens, E. Bursztein, P. Karpman, A. Albertini,
and Y. Markov. The first collision for full SHA-1. In \emph{CRYPTO},
2017.

\bibitem{castryck23} W. Castryck and T. Decru. An efficient key recovery
attack on SIDH. In \emph{EUROCRYPT}, 2023.

\bibitem{beullens22} W. Beullens. Breaking Rainbow takes a weekend on a
laptop. In \emph{CRYPTO}, 2022.

\bibitem{bogdanov11} A. Bogdanov, D. Khovratovich, and C. Rechberger.
Biclique cryptanalysis of the full AES. In \emph{ASIACRYPT}, 2011.

\bibitem{des98} Electronic Frontier Foundation. \emph{Cracking DES}.
O'Reilly, 1998.

\bibitem{rsa768} T. Kleinjung et al. Factorization of a 768-bit RSA
modulus. In \emph{CRYPTO}, 2010.

\bibitem{boudot20} F. Boudot, P. Gaudry, A. Guillevic, N. Heninger, E.
Thom\'e, and P. Zimmermann. Comparing the difficulty of factorization
and discrete logarithm. In \emph{CRYPTO}, 2020.

\bibitem{kyberslash} KyberSlash: timing side-channel vulnerabilities in
Kyber implementations (public disclosures). 2024.

\bibitem{tlsid} IETF TLS Working Group. \emph{Hybrid key exchange with
ECDHE and ML-KEM} (Internet-Draft). 2024.

\bibitem{pqxdh} M. Marlinspike and R. Perrin. \emph{The PQXDH Key
Agreement Protocol}. Signal Foundation, 2023.

\bibitem{pq3} Apple Security Engineering and Architecture. \emph{iMessage
with PQ3}. 2024.

\bibitem{chrome24} Google Chrome and Cloudflare engineering blogs.
Post-quantum TLS deployment notes. 2023--2024.

\bibitem{cbom} OWASP CycloneDX. \emph{CycloneDX v1.6 with the
Cryptographic Bill of Materials}. 2024.

\bibitem{etsi15} M. Campagna et al. \emph{Quantum-Safe Cryptography and
Security}. ETSI White Paper No.~8, 2015.

\bibitem{peskir06} G. Peskir and A. Shiryaev. \emph{Optimal Stopping and
Free-Boundary Problems}. Birkh\"auser, 2006.

\end{thebibliography}
\end{document}